\numberwithin{equation}{section}
\newtheorem{thm}{Theorem}
\newtheorem{lem}{Lemma}
\newtheorem{remark}{Remark}
\newcommand{\fq}{{\mathbb F}_{q}}
\newcommand{\fqtwo}{{\mathbb F}_{q^2}}
\newcommand{\ftwon}{{\mathbb F}_{2^n}}
\newcommand{\ftwom}{{\mathbb F}_{2^m}}
\newcommand{\ftwo}{{\mathbb F}_{2}}
\newcommand{\Tr}{{\rm {Tr}}}
\newcommand{\ffc}{f_{\underline{c}}}
\def\oc{{\overline{c}}}
\def\oa{{\overline{a}}}
\def\oz{{\overline{z}}}
\def\ob{{\overline{b}}}
\def\os{{\overline{s}}}
\def\orr{{\overline{r}}}
\def\oxi{{\overline{\xi}}}
\def\ox{{\overline{x}}}
\def\ot{{\overline{\theta}}}
\def\te{{\theta}}
\begin{document}

\title{On permutation quadrinomials and $4$-uniform BCT}
\author{ Nian Li
\thanks{N. Li and X. Zeng are at the Hubei Key Laboratory of Applied Mathematics, Faculty of Mathematics and Statistics, Hubei University, Wuhan, 430062, China, and also with the State Key Laboratory of Cryptology, P.O. Box 5159, Beijing 100878, China.  Email: nian.li@hubu.edu.cn, xzeng@hubu.edu.cn}
, Maosheng Xiong
\thanks{M. Xiong is at the Department of Mathematics, The Hong Kong University of Science and Technology,
Clear Water Bay, Kowloon, Hong Kong, China. E-mail: mamsxiong@ust.hk}
and Xiangyong Zeng
}
\date{}
\maketitle

\begin{quote}
{\small {\bf Abstract:}
We study a class of general quadrinomials over the field of size $2^{2m}$ with odd $m$ and characterize conditions under which they are permutations with the best boomerang uniformity, a new and important parameter related to boomerang-style attacks. This vastly extends previous results from several recent papers.
}

{\small {\bf Keywords:} Boomerang uniformity, Differential uniformity, Permutation polynomial. }
\end{quote}

\section{Introduction} \label{sec1}

\subsection{Background}

In symmetric key cryptography, Substitution boxes (S-boxes) are basic components to perform substitutions. Being the only source of nonlinearity in many well-known block ciphers such as IDEA, AES and DES \cite{Knud}, they play a central role in obscuring the relationship between the key and ciphertext, the perplexity property depicted by Shannon \cite{Shannon}. The security of such ciphers depends crucially on the quality of the S-boxes used. It is thus important to find new designs of S-boxes with good cryptographic properties with respect to various attacks \cite{BSD,Lai,M,WAG}.

Mathematically, S-boxes are vectorial (multi-output) Boolean functions, that is, functions $F: V \to V'$ where $V$ and $V'$ are $m$ and $n$-dimensional vector spaces over the binary field $\ftwo$ respectively. 


Differential attack, proposed by Biham and Shamir \cite{BSD}, is one of the most fundamental cryptanalytic tools to assess the security of block ciphers. For an $n$-bit S-box $F:\ftwo^n \to \ftwo^n$, the properties for differential propagations of $F$ are captured in the DDT (Difference Distribution Table) of $F$ which are given by
$${\rm DDT}_{F}(a,b)=\#\left\{x\in \ftwo^n: F(x)+F(x+a)=b \right\} \quad \forall a, b \in \ftwo^n. $$
The differential uniformity of $F$ is defined as
$$\delta(F)=\max_{a,b\in \ftwo^n,\; a \ne \bf{0}}{\rm DDT}_{F}(a,b).$$
Differential uniformity is an important concept in cryptography as it quantifies the degree of security of the cipher with respect to differential attack if $F$ is used as an S-box in the cipher. In particular, if $\delta(F) =2$, then $F$ is called an almost perfect nonlinear (APN) function, which offers maximal resistance to differential attacks.

Boomerang attack is an important cryptanalysis technique introduced by Wagner \cite{WAG} in 1999 against block ciphers involving S-boxes. It can be considered as an extension of the classical differential attack \cite{BSD}. In a boomerang attack, the target cipher is regarded as a composition of two sub-ciphers, and two differentials are combined and analyzed for the upper and the lower parts of the cipher. The reader is referred to \cite{BDK01,BDK02,BDD03,BK09,DKS10,KKS01,KHP+12,Song-QH} for more details.

At Eurocrypt 2018, Cid, Huang, Peyrin, Sasaki and Song \cite{CHP} introduced a new tool called Boomerang Connectivity Table (BCT) to measure the resistance of a block cipher against the boomerang attack. The BCT can be used to more accurately evaluate the probability of generating a right quartet in boomerang-style attacks, and it provides more useful information when compared with the DDT \cite{CHP}. Let $F:\ftwo^n \to \ftwo^n$ be a permutation. The entries of the BCT of $F$ are given by
$${\rm BCT}_{F}(a,b)=\# \left\{x\in \ftwo^n: F^{-1}(F(x)+b)+F^{-1}(F(x+a)+b)=a \right\},$$
where $F^{-1}$ denotes the compositional inverse of $F$. The boomerang uniformity of $F$, introduced by Boura and Canteaut in \cite{BCO}, is defined as
$$\beta(F)=\max_{a,b\in \ftwo^n \setminus\{\bf{0}\}}{\rm BCT}_{F}(a,b).$$
The function $F$ is called a $\beta(F)$-uniform BCT function.

Roughly speaking, S-boxes $F$ with smaller value $\beta(F)$ provide stronger security against boomerang-style attacks. It was known in \cite{CHP} that $\beta(F) \ge \delta(F)$, and if $\delta(F)=2$, then $\beta(F)=2$, hence APN permutations offer maximal resistance to both differential and boomerang attacks. However, given the difficulty of finding APN permutations in even dimension (This is the Big APN Problem \cite{BDMW}), in even dimension which is the most interesting for real applications, we are contented with the next best, that is, permutations $F$ with $\beta(F)=4$.

Compared with an abundance of differentially $4$-uniform permutations in the literature (see \cite{Bracken-L,BTT,Gold,Kasami,Nyberg} for primary constructions and \cite{Peng-T,Qu-TLG,Tan-QTL,Tang-CT} and the references therein for constructions via the inverse function), it seems much harder to find permutations with $4$-uniform BCT in even dimension. Currently only six families of such permutations have been discovered (see \cite{BCO,LQSL,Li-Qu,Li-Xiong,MTX,TLZ} for details).


In particular, in \cite{TLZ} the authors studied a class of quadrinomial permutations of the form
\begin{eqnarray*}
  F(x)=x^{3q}+a_1x^{2q+1}+a_2x^{q+2}+a_3x^3\in \fqtwo[x],  \quad \forall\; a_1,a_2,a_3 \in \fqtwo
\end{eqnarray*}
where $q$ is an odd power of $2$, and derived general conditions on the coefficients $a_i$'s under which $F$ is a permutation and $\beta(F)=4$, and very recently, in \cite{Li-Qu} and independently in \cite{Li-Xiong} the authors considered the generalized butterfly structure (see \cite{FFW,LTYW, PUB}) and showed that the closed butterfly yields permutations with $4$-uniform BCT under suitable conditions. It was pointed out in \cite{Li-Qu,Li-Xiong} that the closed butterfly can be equivalently expressed as the univariate form
\begin{eqnarray} \label{1:fxx}
  c_0z^{(2^k+1)q}+c_1z^{2^kq+1}+c_2z^{q+2^k}+c_3z^{2^k+1},\; z\in \fqtwo
\end{eqnarray}
for some special $c_0, c_1, c_2, c_3\in\mathbb{F}_q$.

The objective of this paper is to study quadrinomials of the form \eqref{1:fxx} for much more general coefficients $c_i$'s and investigate conditions under which they become permutations with $4$-uniform BCT.

\subsection{Statement of the main result}

Throughout this paper, let $m$ and $k$ be both odd integers such that $\gcd(m,k)=1$. Let $n=2m$.  For any $x \in \ftwon$, denote $\overline{x}:=x^{2^m}$. For any $\underline{c}:=(c_0,c_1,c_2,c_3) \in \ftwon^4$, we consider a general quadrinomial $ f_{\underline{c}}: \ftwon \to \ftwon$ of the form
\begin{eqnarray}\label{1:fx}
  f_{\underline{c}}(x) &=& c_0\overline{x}^{2^k+1}+c_1\overline{x}^{2^k}x+c_2\overline{x}x^{2^k}+c_3x^{2^k+1}.
\end{eqnarray}
Denote
\begin{eqnarray*}
\begin{array}{llllll}
  \theta_1&=&c_0\overline{c}_0+c_1\overline{c}_1+c_2\overline{c}_2+c_3\overline{c}_3, &\theta_2&=&\overline{c}_0c_1+\overline{c}_2c_3, \\
 \theta_3&=&c_0\overline{c}_2+c_1\overline{c}_3,  &\theta_4&=&c_1\overline{c}_1+c_4\overline{c}_4,
\end{array}
\end{eqnarray*}
and define
\begin{eqnarray}\label{1:Gamma}
  \Gamma &=& \left\{\underline{c}\in\ftwon^{\;4}: \theta_1\ne 0, \Tr_1^m\left(\frac{\theta_4}{\theta_1}\right)=1, \left(\frac{\theta_2}{\theta_1}\right)^{2^k}=\frac{\overline{\theta}_3}{\theta_1}\right\}.
\end{eqnarray}
The set $\Gamma$ can be partitioned as $\Gamma=\Gamma_0 \cup \Gamma_1$, where
\begin{eqnarray}
\label{1:Gamma-i}
  \Gamma_i &=& \left\{\underline{c}\in\Gamma:   \Tr_1^m\left(\frac{\theta_2\overline{\theta}_2}{\theta_1^2}\right)=i\right\}, \quad i =0,1.
\end{eqnarray}
Our main result is stated as follows.
\begin{thm} \label{1:mainthm}
Let the setting be as above, and $f_{\underline{c}}$, $\Gamma$ and $\Gamma_i$ be defined by \eqref{1:fx}, \eqref{1:Gamma} and \eqref{1:Gamma-i} respectively.
\begin{enumerate}
\item[(1)] If $\underline{c}\in\Gamma$, then $f_{\underline{c}}$ is a permutation on $\ftwon$;
\item[(2)] If $\underline{c}\in\Gamma_0$, then $\beta(f_{\underline{c}})=\delta(\ffc)=4$;
\item[(3)] If $\underline{c}\in\Gamma_1$, then $\beta(f_{\underline{c}})\ge \delta(\ffc)=2^{m+1}$.
\end{enumerate}
\end{thm}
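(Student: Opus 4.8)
The starting point is that every monomial of $f_{\underline c}$ has the form $x^{2^i+2^j}$ with $i\neq j$, so $f_{\underline c}$ is a Dembowski--Ostrom (quadratic) polynomial with $f_{\underline c}(0)=0$; hence its derivative is affine. Writing $B(x,a):=f_{\underline c}(x+a)+f_{\underline c}(x)+f_{\underline c}(a)$ for the associated symmetric bilinear map, one has $f_{\underline c}(x+a)+f_{\underline c}(x)=B(x,a)+f_{\underline c}(a)$, where $x\mapsto B(x,a)$ is $\mathbb F_2$-linear for each fixed $a$. I record the three consequences that drive everything. First, $f_{\underline c}$ is a permutation iff for every $a\neq 0$ the equation $f_{\underline c}(x+a)=f_{\underline c}(x)$ has no solution, i.e. $f_{\underline c}(a)\notin\operatorname{Im}B(\cdot,a)$. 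Second, setting $K_a:=\{x\in\ftwon:B(x,a)=0\}$, each $\mathrm{DDT}_{f_{\underline c}}(a,b)\in\{0,|K_a|\}$, so $\delta(f_{\underline c})=\max_{a\neq 0}|K_a|$. Third, a direct manipulation of the definition of the BCT shows that for a quadratic permutation $\mathrm{BCT}_{f_{\underline c}}(a,b)=\sum_{s\in K_a\setminus\{0\}}\mathrm{DDT}_{f_{\underline c}}(s,b)$; since $a\in K_a$, the term $s=a$ already recovers $\beta\ge\delta$.

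The technical heart is the explicit determination of $K_a$. Expanding, $B(x,a)=P\,x^{2^k}+Q\,\overline x^{2^k}+R\,x+S\,\overline x$ with $P,Q,R,S\in\ftwon$ explicit $\mathbb F_2$-linear expressions in $a,\overline a$ and the $c_i$. Applying the conjugation $y\mapsto\overline y$ to $B(x,a)=0$ produces a second relation in the same unknowns $x,\overline x,x^{2^k},\overline x^{2^k}$; when the $2\times2$ determinant $P\overline P+Q\overline Q$ is nonzero one solves for $x^{2^k}$ as an $\ftwon$-linear combination $x^{2^k}=\alpha x+\beta\overline x$, which after one further conjugation reduces to a single linearized polynomial whose root count is governed, via $\gcd(m,k)=1$, by norms and traces built from $\alpha,\beta$. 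The plan is to re-express these quantities through $\theta_1,\dots,\theta_4$ so that the defining relations of $\Gamma$, namely $\theta_1\neq0$, $(\theta_2/\theta_1)^{2^k}=\overline\theta_3/\theta_1$ and $\Tr_1^m(\theta_4/\theta_1)=1$, pin down $\dim_{\mathbb F_2}K_a$ for every $a$ and simultaneously force $f_{\underline c}(a)\notin\operatorname{Im}B(\cdot,a)$, yielding the permutation claim (1). The split between $\Gamma_0$ and $\Gamma_1$ then comes from an Artin--Schreier step: the quadratic equation controlling the maximal kernel dimension is solvable exactly according to $\Tr_1^m(\theta_2\overline\theta_2/\theta_1^2)$, so $\max_{a\neq0}|K_a|$ equals $4$ when this trace is $0$ and $2^{m+1}$ when it is $1$, giving $\delta(f_{\underline c})=4$ on $\Gamma_0$ and $\delta(f_{\underline c})=2^{m+1}$ on $\Gamma_1$, which is exactly the differential content of (2) and (3).

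For the boomerang claim in (2) it remains to prove $\mathrm{BCT}_{f_{\underline c}}(a,b)\le 4$ for all nonzero $a,b$, after which $\beta=\delta=4$ follows from $\beta\ge\delta=4$. Using the sum formula, when $|K_a|=2$ we have $K_a\setminus\{0\}=\{a\}$ and $\mathrm{BCT}(a,b)=\mathrm{DDT}(a,b)\le2$. The delicate case is $|K_a|=4$, where $K_a\setminus\{0\}=\{a,s_1,s_2\}$ with $s_1+s_2=a$ and $\mathrm{BCT}(a,b)=\mathrm{DDT}(a,b)+\mathrm{DDT}(s_1,b)+\mathrm{DDT}(s_2,b)$. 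I will show that on $\Gamma_0$, whenever $|K_a|=4$, every $s\in K_a\setminus\{0\}$ shares one common image $\operatorname{Im}B(\cdot,s)=V:=\operatorname{Im}B(\cdot,a)$ of codimension $2$; then each summand lies in $\{0,4\}$ and is nonzero precisely for $b$ in the coset $f_{\underline c}(s)+V$. Since $B(s_1,a)=0$ gives $f_{\underline c}(a)+f_{\underline c}(s_1)+f_{\underline c}(s_2)=0$, while the permutation property forces $f_{\underline c}(s)\notin V$ for each such $s$ (because $\mathrm{DDT}(s,0)=0$), the three cosets $f_{\underline c}(s)+V$ are exactly the three nonzero cosets of $V$ in $\ftwon/V\cong\mathbb F_2^2$. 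Hence for any fixed $b$ at most one summand is nonzero, so $\mathrm{BCT}(a,b)\in\{0,4\}$, completing (2).

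I expect two steps to be the main obstacles. The first is the exact kernel/root count of the reduced linearized polynomial and its faithful translation into the $\theta_i$-conditions, including tracking the determinant-zero locus across both branches and verifying that the $\Gamma_0/\Gamma_1$ dichotomy is controlled by $\Tr_1^m(\theta_2\overline\theta_2/\theta_1^2)$; this is where the hypotheses that $m,k$ are odd with $\gcd(m,k)=1$ get consumed. The second, more conceptual, obstacle is the structural statement that on $\Gamma_0$ all degenerate directions share the single image $V$, which is precisely what prevents the three DDT contributions from accumulating; without it the naive bound only gives $\mathrm{BCT}\le 12$.
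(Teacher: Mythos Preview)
Your plan tracks the paper closely. Both linearize the derivative, combine with the conjugate to eliminate $\overline x^{\,2^k}$ and reach $v_1x^{2^k}+v_2\overline x+v_3x+v_4=0$ with $v_1=(a\overline a)^{2^k}M(a)$, $M(a)=\theta_1a\overline a+\overline\theta_2\,\overline a^{\,2}+\theta_2a^2$ (your determinant $P\overline P+Q\overline Q$), and then appeal to a root-count for $x^{2^k}+\tau\overline x+(1+\tau)x+\nu=0$. The $\Gamma_0/\Gamma_1$ split is exactly the dichotomy ``$M(a)\ne0$ for all $a\ne0$'' versus ``$M(a)=0$ for some $a$'', governed by $\Tr_1^m(\theta_2\overline\theta_2/\theta_1^2)$ as you anticipate; Parts~(1) and~(3) then go through essentially as you outline, the $v_1=0$ branch being handled by a further factorization of the original equation.

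For Part~(2), two points. First, a minor one: on $\Gamma_0$ one always has $|K_a|=4$; the root-count together with $M(a)\ne0$ rules out $|K_a|=2$, so only your ``delicate case'' occurs. Second, and more substantively, your endgame rests on the claim that the three nonzero elements $s$ of $K_a$ share a common image $\operatorname{Im}B(\cdot,s)=V$. This is true, but it is \emph{not} a formal consequence of the easy fact $K_s=K_a$: equal kernels do not force equal images. The paper neither states nor uses image equality. Instead it writes $K_a\setminus\{0\}=\{a,\eta(a),\eta^{(2)}(a)\}$ for an explicit $\mathbb F_2$-linear order-$3$ map $\eta$ and, for each $z$, evaluates the two trace obstructions coming from the root-count lemma. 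The first obstructions sum to zero over $K_a\setminus\{0\}$ because $\sum_z f_{\underline c}(z)=0$, so either exactly one vanishes (giving $S_{f_{\underline c}}(a,b)\le4$ at once) or all three do. In the latter case the paper shows, via the polynomial identity
\[
\eta^{(2)}(z)^{2^k}\bigl(c_2\,\overline{\eta(z)}+c_3\,\eta(z)\bigr)+\overline{\eta^{(2)}(z)}^{\,2^k}\bigl(c_0\,\overline{\eta(z)}+c_1\,\eta(z)\bigr)=f_{\underline c}(z),
\]
that the second obstruction equals $1$ for every $z$, whence $S_{f_{\underline c}}(a,b)=0$. Your image-equality statement is in fact equivalent to these two facts combined, so proving it will require work of the same depth; as written, this is the one step in your plan without a concrete mechanism.
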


\begin{remark}
In the setting of Therorem \ref{1:mainthm}, if $k$ is even, $m$ is odd, $\gcd(m,k)=1$ and $\ffc$ is still of the form \eqref{1:fx}, then letting $k':=m-k$, we can obtain
\[\ffc(x)^{2^{k'}}= c_0' \ox^{2^{k'}+1}+c_1' \ox^{2^{k'}}x+c_2'\ox x^{2^{k'}}+c_3'x^{2^{k'}+1} , \]
where
\[c_0'=c_2^{2^{k'}}, c_1'=c_0^{2^{k'}}, c_2'=c_3^{2^{k'}}, c_3'=c_1^{2^{k'}}. \]
Noting that $k'$ is odd and $\gcd(m,k')=1$, denoting $\underline{c}':=(c_0',\ldots,c_3')$ and appealing to Theorem \ref{1:mainthm}, we can still obtain similar conditions to 1)-3) under which we can conclude that $\ffc$ is a permutation; $\beta(\ffc)=4$ and $\delta(\ffc)=2^{m+1}$. For the sake of simplicity, we omit the details.
\end{remark}

\begin{remark}
Similar to \cite{TLZA}, by using affine equivalence, the coefficients $c_i$'s of the quadrinomial $\ffc$ in \eqref{1:fx} may be simplified: if $c_0c_3 \ne 0$, we may assume that $c_0=1$; by considering $\ffc(\lambda x)$ for some $\lambda \in \ftwon^*$, we may assume that $c_1 \in \ftwom$. Actually when $k=1$, $c_0=1$ and $c_1\in\fq$, the function $f_{\underline{c}}(x)$ was originally studied in \cite{TLZ,TLZA,TZH}. In fact in this case 1) coinsides with the main result of \cite{TLZA} and 2) coinsides with the main result of \cite{TLZ}. On the other hand, using the special parametrization appearing in the papers, one can easily verify that \cite[Theorem 2]{Li-Xiong} and \cite[Theorem 1.1]{Li-Qu} can be derived from (1) and (2) of Theorem \ref{1:mainthm}. 
\end{remark}

\begin{remark}
Our computer experiments seem to indicate that if $f_{\underline{c}}$ is a permutation over $\ftwon$, then it is necessary that $\underline{c}\in\Gamma$. When $k=1$, this is indeed the case and was recently proved in \cite{Li-QLC}. For a general $k$, however, the method used there does not seem to work. We will come back to this question in the near future. If this ``necessity property'' were proved, then Theorem \ref{1:mainthm} indicates that $\ffc$ is a permutation with $4$-uniform BCT if and only if $\underline{c} \in \Gamma_0$, that is, the set $\Gamma_0$ completely charaterizes permutaitons $\ffc$ with $4$-uniform BCT. This may be another reason why we would expect that \cite[Theorem 2]{Li-Xiong} and \cite[Theorem 1.1]{Li-Qu} can be derived from (1) and (2) of Theorem \ref{1:mainthm}.
\end{remark}

\begin{remark}
Finally, for two permutations $F$ and $G$ over $\ftwon$, it is known that $\beta(F)=\beta(G)$ if $G=F^{-1}$ or $F$ and $G$ are affine equivalent \cite{BCO}; and if both $F$ and $G$ are quadratic and extended affine equivalent, then $\beta(G)=4$ if $\beta(F)=4$ \cite{MTX}. We have checked that for $m=3$, all the functions $\ffc$ for $\underline{c} \in \Gamma_0$ are affine equivalent to the Gold function $x^{2^k+1}$, which is known to be a permutation of $\mathbb{F}_{2^{2m}}$ with $4$-uniform BCT. It might be interesting to know if this holds for a general odd $m \ge 5$, or if there are permutations $\ffc$ with $4$-uniform BCT which are not affinely equivalent to the Gold function. In Table \ref{tab-known} we list all known permutations over $\ftwon$ with $\beta(F)=4$ for even $n$.
\end{remark}

\begin{table}\caption{Known permutations $F(x)$ with $\beta(F)=4$ over $\fqtwo$ for $q=2^m$}
\begin{center}\label{tab-known}
\begin{tabular}{|c|c|c|c|}
\hline
 No. & $F(x)$ & Reference & Remark\\
\hline
\hline
1 & $x^{-1}$ &  \cite{BCO} & $m$ odd \\
\hline
 2 &  $x^{2^t+1}$ &  \cite{BCO} & $m$ odd, $\gcd(2m,t)=2$\\
\hline
 3 &  $x^{q+2}+\gamma x$ &    \cite{LQSL}& Equivalent to No. 2 \\
\hline
 4 &  $\alpha x^{2^s+1}+\alpha^{2^{t}}x^{2^{-t}+2^{t+s}}$ &  \cite{MTX}    & - \\
\hline
 5 &  $x^{3q}+a_1x^{2q+1}+a_2x^{q+2}+a_3x^3$ &  \cite{TLZ} & Covered by No. 7\\
 \hline
 6 &  $c_0x^{(2^k+1)q}+c_1x^{2^kq+1}+c_2x^{q+2^k}+c_3x^{2^k+1}$ &  \cite{Li-Qu,Li-Xiong} & Covered by No. 7 \\
 \hline
 7 &  $c_0x^{(2^k+1)q}+c_1x^{2^kq+1}+c_2x^{q+2^k}+c_3x^{2^k+1}$ & This paper & Equivalence unknown \\
 \hline
\end{tabular}
\end{center}
\end{table}

The rest of this paper is organized as follows: in Section \ref{sec2:pre} we collect some solvability criteria on certain equations over finite fields which will be used repeatedly in the paper; in Section \ref{sec3:app} we present some identities and relations involving the $c_i$'s and $\theta_i$'s from the quarinomial $\ffc$; in Section \ref{sec4:diff} we discuss in details the solvability of the difference equation $\ffc(x+a)+\ffc(x)=b$; in Section \ref{sec5:mainthm} which is the longest section of the paper, we prove the main result, dealing with Parts (1), (3) and (2) of Theorem \ref{1:mainthm} individually in three seperate subsections.


\section{Preliminaries} \label{sec2:pre}


The following three results will be used repeatedly in the rest of the paper.

\begin{lem} (\cite{Lidl}) \label{lem0-0}
Let $n$ be a positive integer. For any $a \in \ftwon^*:=\ftwon \setminus \{0\}$ and $b \in \ftwon$, the equation
\[x^{2}+ax+b=0\]
is solvable (with two solutions) in $\ftwon$ if and only if
\[\Tr_1^n\left(\frac{b}{a^2}\right)=0.\]
Here $\Tr_{1}^n(\cdot)$ is the absolute trace map from $\mathbb{F}_{2^n}$ to the binary field $\mathbb{F}_{2}$.
\end{lem}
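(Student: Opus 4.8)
The plan is to reduce the general quadratic to the normalized Artin--Schreier form and then to study the resulting additive map. Since $a \in \ftwon^*$, I would first perform the change of variable $x = ay$. Substituting into $x^2 + ax + b = 0$ and dividing by $a^2$ yields the equivalent equation
\[ y^2 + y = \frac{b}{a^2}, \]
so the original equation is solvable in $\ftwon$ if and only if this one is, and the two solution sets correspond to one another via $x \mapsto x/a$. Writing $c := b/a^2$, the task becomes to determine exactly when $c$ lies in the image of the map $L \colon \ftwon \to \ftwon$ defined by $L(y) = y^2 + y$.

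The next step is to observe that $L$ is $\ftwo$-linear, since squaring is additive in characteristic $2$. Its kernel consists of the solutions of $y^2 + y = 0$, namely $y \in \{0,1\} = \ftwo$, so $\ker L$ has dimension $1$ over $\ftwo$. Moreover, every value $L(y)$ has absolute trace zero: using that the Frobenius map fixes the trace, $\Tr_1^n(y^2) = \Tr_1^n(y)$, hence
\[ \Tr_1^n\bigl(L(y)\bigr) = \Tr_1^n(y^2) + \Tr_1^n(y) = 0. \]
Therefore the image of $L$ is contained in the hyperplane $H = \{ z \in \ftwon : \Tr_1^n(z) = 0 \}$.

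Finally I would close the gap by a dimension count. By rank--nullity over $\ftwo$, the image of $L$ has dimension $n - 1$, so $|\mathrm{im}\, L| = 2^{n-1} = |H|$; combined with $\mathrm{im}\, L \subseteq H$ this forces $\mathrm{im}\, L = H$. Consequently $y^2 + y = c$ is solvable precisely when $\Tr_1^n(c) = 0$, that is, $\Tr_1^n(b/a^2) = 0$, which is the claimed criterion. When one root $x_0$ exists, the other is $x_0 + a$, since $(x_0+a)^2 + a(x_0+a) + b = x_0^2 + ax_0 + b = 0$, and the two are distinct because $a \ne 0$; this accounts for the ``two solutions'' assertion. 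There is no genuine obstacle in this argument; the only point requiring care is the counting step that upgrades the containment $\mathrm{im}\, L \subseteq H$ to the equality $\mathrm{im}\, L = H$, which is what makes the trace condition both necessary and sufficient.
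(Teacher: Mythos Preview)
Your argument is correct and complete: the substitution $x=ay$ reduces to the Artin--Schreier equation $y^2+y=b/a^2$, the image of $y\mapsto y^2+y$ is contained in the trace-zero hyperplane by the Frobenius-invariance of $\Tr_1^n$, and the rank--nullity count over $\ftwo$ forces equality, giving both directions of the criterion; the observation that a second root is $x_0+a$ handles the ``two solutions'' clause.

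The paper itself does not supply a proof of this lemma at all---it simply records the statement with a citation to Lidl--Niederreiter and uses it as a black box later on. So there is nothing to compare against; your write-up is a standard self-contained proof of a result the authors chose to quote.
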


\begin{lem}  (\cite{Kim}) \label{lem0}
Let $n,k$ be positive integers such that $\gcd(n,k)=1$. For any $a \in \ftwon$, the equation
\[x^{2^k}+x=a\]
has either 0 or 2 solutions in $\ftwon$. Moreover, it is solvable with two solutions in $\ftwon$ if and only if $\Tr_1^n(a)=0$.
\end{lem}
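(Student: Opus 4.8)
\section*{Proof proposal}

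The plan is to reduce the equation $x^{2^k}+x=a$ to a linear map over $\ftwon$ viewed as an $\ftwo$-vector space, and then analyze its kernel and image. Consider the $\ftwo$-linear map $L:\ftwon\to\ftwon$ defined by $L(x)=x^{2^k}+x$. Since $L$ is additive, the solution set of $L(x)=a$ is either empty or a coset of $\ker L$, so the number of solutions is either $0$ or $|\ker L|$. The first task is therefore to compute $\ker L$, i.e.\ the set of $x$ with $x^{2^k}=x$. These are exactly the elements fixed by the Frobenius $x\mapsto x^{2^k}$, which form the subfield $\ftwon\cap\mathbb{F}_{2^k}=\mathbb{F}_{2^{\gcd(n,k)}}$. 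Invoking the hypothesis $\gcd(n,k)=1$, this subfield is $\ftwo$, so $|\ker L|=2$ and consequently the equation has either $0$ or $2$ solutions, establishing the first assertion.

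For the solvability criterion, I would identify the image of $L$ with the kernel of the trace. First I would note that $L$ has rank $n-\dim_{\ftwo}\ker L=n-1$, so its image is a hyperplane in $\ftwon$, i.e.\ has dimension $n-1$ over $\ftwo$. Next I would show that $\mathrm{Im}\,L\subseteq\ker(\Tr_1^n)$: for any $x$, writing $a=x^{2^k}+x$, the trace $\Tr_1^n(a)=\Tr_1^n(x^{2^k})+\Tr_1^n(x)$, and since the absolute trace is invariant under the Frobenius $x\mapsto x^{2}$ (hence under $x\mapsto x^{2^k}$), we get $\Tr_1^n(x^{2^k})=\Tr_1^n(x)$, so $\Tr_1^n(a)=0$. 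Thus every $a$ in the image satisfies $\Tr_1^n(a)=0$. Since $\ker(\Tr_1^n)$ is itself a hyperplane of dimension $n-1$, and $\mathrm{Im}\,L$ is an $(n-1)$-dimensional subspace contained in it, the two coincide. Therefore $L(x)=a$ is solvable if and only if $a\in\mathrm{Im}\,L=\ker(\Tr_1^n)$, that is, if and only if $\Tr_1^n(a)=0$, which is the desired criterion.

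The only genuinely delicate point is confirming that $\ker L$ has dimension exactly $1$, which hinges on the identity $\ftwon\cap\mathbb{F}_{2^k}=\mathbb{F}_{2^{\gcd(n,k)}}$; everything else follows from elementary linear algebra over $\ftwo$ together with the Frobenius-invariance of the trace. I would expect this to be the main (albeit routine) obstacle, since it is where the coprimality hypothesis $\gcd(n,k)=1$ is essential: without it the kernel would be larger, the count of solutions would change, and the clean two-solution dichotomy would fail. An alternative, more self-contained route avoiding the subfield identity is to observe that $x^{2^k}=x$ forces $x^{2^{kt}}=x$ for all $t$, and by choosing $t$ with $kt\equiv 1\pmod n$ (which exists precisely because $\gcd(n,k)=1$) one gets $x^{2}=x$, so $x\in\ftwo$; this directly gives $|\ker L|=2$ without quoting the general subfield-intersection fact.
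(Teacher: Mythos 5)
Your proof is correct. Note that the paper itself gives no argument for this lemma --- it is stated as a quoted result from the reference [Kim] --- so there is no in-paper proof to compare against; your kernel--image analysis of the $\ftwo$-linear map $L(x)=x^{2^k}+x$ is precisely the standard proof of this classical fact. Both halves check out: $\ker L=\mathbb{F}_{2^{\gcd(n,k)}}=\ftwo$ gives the $0$-or-$2$ dichotomy, and the containment $\mathrm{Im}\,L\subseteq\ker(\Tr_1^n)$ together with the dimension count forces equality. The only step you use silently is that $\Tr_1^n$ is a \emph{nonzero} functional (so that $\ker(\Tr_1^n)$ really has dimension $n-1$); this is immediate since $x+x^2+\cdots+x^{2^{n-1}}$ has degree $2^{n-1}<2^n$ and hence cannot vanish identically on $\ftwon$, but it is worth stating. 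Your alternative closing argument --- iterating the Frobenius and choosing $t$ with $kt\equiv 1 \pmod{n}$ to conclude $x^2=x$ --- is also valid and makes the role of $\gcd(n,k)=1$ transparent without invoking the subfield-intersection identity.
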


\begin{lem} (\cite{Li-Xiong}) \label{lemma-core}
Let $m, k$ be odd integers such that $\gcd(k,m)=1$. Let $n=2m$. For any $\tau,\nu\in\ftwon$, define
\begin{equation*}
L_{\tau,\nu}(x)=x^{2^k}+\tau\overline{x}+(\tau+1)x+\nu.
\end{equation*}
Denote by $N(\tau,\nu)$ the number of solutions of  $L_{\tau,\nu}(x)=0$ in $\ftwon$. Then $N(\tau,\nu) \in \{0,2,4\}$. More precisely, let $\lambda \in\ftwom$ and $\Delta, \mu\in\ftwon$ be defined by the equations
\begin{eqnarray*}
\lambda^{2^k-1}=1+\tau+\overline{\tau},\;\;\; \Delta=\frac{\nu}{\lambda^{2^k}},\;\;\; \mu^{2^k}+\mu=\tau \lambda.
\end{eqnarray*}
Then
\begin{enumerate}
    \item [(1)] $N(\tau,\nu)=2$ if and only if one of the following conditions is satisfied:\\
         (i)  $1+\tau+\overline{\tau}=0$ and $\sum_{i=0}^{m-1}\left(\tau^{2^k}(\nu+\overline{\nu})+\nu^{2^k}\right)^{2^{ki}}=\nu+\overline{\nu}$;\\
         (ii) $1+\tau+\overline{\tau}\neq 0$, $\Tr_{1}^{n}(\Delta)=0$ and $\overline{\mu}+\mu=\lambda+1$.
    \item [(2)] $N(\tau,\nu)=4$ if and only if
     $1+\tau+\overline{\tau}\neq 0$, $\Tr_{1}^{n}(\Delta)=0$, $\overline{\mu}+\mu=\lambda$ and
     $\Tr_{1}^{n}\left(\frac{\mu^{2^k}\overline{\nu}}{\lambda^{2^k}}\right)=0$.
\end{enumerate}
If $\nu=0, 1+\tau+\overline{\tau} \ne 0$ and $\mu+\overline{\mu}=\lambda$, then the set of four solutions of $L_{\tau,\nu}(x)=0$ in $\ftwon$ is given by $\left\{0,1,\mu, \mu+1 \right\}$.
\end{lem}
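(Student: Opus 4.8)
Since $x\mapsto L_{\tau,\nu}(x)+\nu=x^{2^k}+\tau\overline{x}+(\tau+1)x$ is $\ftwo$-linear, I would set $L_0(x):=x^{2^k}+\tau\overline{x}+(\tau+1)x$, so that $L_{\tau,\nu}(x)=0$ is the same as $L_0(x)=\nu$. Then $N(\tau,\nu)$ equals $0$ when $\nu\notin\operatorname{im}L_0$, and $2^{\dim_{\ftwo}\ker L_0}$ otherwise. A direct check gives $L_0(0)=0$ and $L_0(1)=1+\tau+(\tau+1)=0$, so $\{0,1\}\subseteq\ker L_0$ and $\dim_{\ftwo}\ker L_0\ge1$; hence $N(\tau,\nu)\in\{0\}\cup\{2,4,8,\dots\}$. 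The proof then has two tasks: show $\dim_{\ftwo}\ker L_0\le 2$ and pin down when it equals $2$ (forcing $N\in\{0,2,4\}$), and decide membership $\nu\in\operatorname{im}L_0$, which by rank--nullity is cut out by $\dim_{\ftwo}\ker L_0$ independent trace conditions.

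\textbf{The key reduction.} Writing $(\tau+1)x+\tau\overline{x}=x+\tau(x+\overline{x})$, the equation becomes $x^{2^k}+x+\tau t=\nu$ with $t:=x+\overline{x}\in\ftwom$. Applying the bar map $z\mapsto z^{2^m}$ (an involution fixing $\ftwom$, with $x^{2^k}+\overline{x}^{2^k}=t^{2^k}$) and adding the result to the original equation eliminates $x$ and yields the scalar equation over $\ftwom$
\begin{equation*}
t^{2^k}+(1+\tau+\overline{\tau})\,t=\nu+\overline{\nu}. \tag{T}
\end{equation*}
Thus every root $x$ produces a root $t=x+\overline{x}$ of (T), and each admissible $t$ is lifted back by solving $x^{2^k}+x=\tau t+\nu$. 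When $1+\tau+\overline{\tau}=0$ (so $\tau+\overline{\tau}=1$), (T) has the unique root $t_0=(\nu+\overline{\nu})^{2^{-k}}$, and by Lemma \ref{lem0} the lift has $0$ or $2$ solutions; barring and adding the lift equation forces $x+\overline{x}\in\{t_0,t_0+1\}$, so solvability together with the consistency $x+\overline{x}=t_0$ is governed precisely by condition 1(i), and $N\in\{0,2\}$. When $1+\tau+\overline{\tau}\ne0$, I fix $\lambda\in\ftwom$ with $\lambda^{2^k-1}=1+\tau+\overline{\tau}$; substituting $t=\lambda u$ turns (T) into $u^{2^k}+u=\Delta+\overline{\Delta}$ with $\Delta=\nu/\lambda^{2^k}$, which by Lemma \ref{lem0} over $\ftwom$ is solvable iff $\Tr_1^m(\Delta+\overline{\Delta})=\Tr_1^n(\Delta)=0$. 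This is where $\Tr_1^n(\Delta)=0$ originates.

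\textbf{Counting the kernel.} For $\nu=0$ and $1+\tau+\overline{\tau}\ne0$, (T) reads $t^{2^k-1}=\lambda^{2^k-1}$ for $t\ne0$; since $\gcd(k,m)=1$ the only $(2^k-1)$-th root of unity in $\ftwom^*$ is $1$, so $t\in\{0,\lambda\}$. The value $t=0$ recovers $\{0,1\}$, while $t=\lambda$ requires $x$ with $x+\overline{x}=\lambda$ and $x^{2^k}+x=\tau\lambda$, i.e. $x\in\{\mu,\mu+1\}$ with $\mu^{2^k}+\mu=\tau\lambda$. Two facts drive the classification: first, $\Tr_1^n(\tau\lambda)=\Tr_1^m(\lambda(\tau+\overline{\tau}))=\Tr_1^m(\lambda^{2^k}+\lambda)=0$ automatically, so $\mu$ always exists; second, barring and adding $\mu^{2^k}+\mu=\tau\lambda$ together with $\lambda^{2^k}=\lambda(1+\tau+\overline{\tau})$ gives $(\mu+\overline{\mu}+\lambda)^{2^k}=\mu+\overline{\mu}+\lambda$, whence $\mu+\overline{\mu}\in\{\lambda,\lambda+1\}$. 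Consequently $\mu,\mu+1$ satisfy $x+\overline{x}=\lambda$ iff $\mu+\overline{\mu}=\lambda$, giving $\ker L_0=\{0,1,\mu,\mu+1\}$ and $\dim=2$; otherwise $\mu+\overline{\mu}=\lambda+1$ and $\ker L_0=\{0,1\}$, $\dim=1$. This proves $N(\tau,\nu)\le4$ and yields the dichotomy behind 1(ii) versus (2), and the displayed four-element solution set follows since $L_0(\mu)=\mu^{2^k}+\mu+\tau(\mu+\overline{\mu})=\tau\lambda+\tau\lambda=0$ when $\mu+\overline{\mu}=\lambda$.

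\textbf{The inhomogeneous conditions and the main obstacle.} With the kernel understood, for $1+\tau+\overline{\tau}\ne0$ it remains to determine $\operatorname{im}L_0$. When $\dim\ker L_0=1$ (i.e. $\mu+\overline{\mu}=\lambda+1$) the image has codimension $1$ and I expect the single condition $\Tr_1^n(\Delta)=0$ from (T) to be equivalent to $\nu\in\operatorname{im}L_0$, giving $N=2$ as in 1(ii). When $\dim\ker L_0=2$ (i.e. $\mu+\overline{\mu}=\lambda$) the image has codimension $2$, so a second independent trace condition must appear; I would obtain it by choosing a root of (T), substituting into the lift $x^{2^k}+x=\tau t+\nu$, and imposing both its solvability ($\Tr_1^n(\tau t+\nu)=0$) and the consistency $x+\overline{x}=t$ --- tracking these through $t=\lambda u$ and $\Delta=\nu/\lambda^{2^k}$ should collapse to $\Tr_1^n\!\left(\mu^{2^k}\overline{\nu}/\lambda^{2^k}\right)=0$. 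Equivalently, one computes the trace-adjoint $L_0^{*}$ and verifies $\ker L_0^{*}=\lambda^{-2^k}\langle 1,\overline{\mu}^{\,2^k}\rangle$, so that $\nu\in\operatorname{im}L_0$ iff $\Tr_1^n(\nu\lambda^{-2^k})=0$ and $\Tr_1^n(\nu\,\overline{\mu}^{\,2^k}\lambda^{-2^k})=0$, the latter equal to $\Tr_1^n(\mu^{2^k}\overline{\nu}/\lambda^{2^k})=0$ after using the bar-invariance of the trace. The main obstacle is precisely this last bookkeeping: proving the reconstruction of $x$ from a root of (T) is consistent with no spurious doubling, and that the two solvability requirements are independent and combine to the exact pair of trace conditions in (2) --- matching the second functional to $\mu^{2^k}\overline{\nu}/\lambda^{2^k}$ itself, rather than merely to a trace-equivalent expression, is the delicate computational step. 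The case $1+\tau+\overline{\tau}=0$ runs in parallel and produces only condition 1(i).
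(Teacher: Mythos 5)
The paper offers no proof of Lemma \ref{lemma-core} to compare against: it is imported wholesale from the companion paper \cite{Li-Xiong}, which is under review and whose argument is not reproduced here. Judged on its own, your proposal is correct in structure and, as far as I can check, in every detail you commit to: the bar-and-add reduction to the subfield equation (T), the lifting step $x^{2^k}+x=\tau t+\nu$ handled by Lemma \ref{lem0}, the dichotomy $\mu+\overline{\mu}\in\{\lambda,\lambda+1\}$ obtained by barring $\mu^{2^k}+\mu=\tau\lambda$, the automatic existence of $\mu$ via $\Tr_1^n(\tau\lambda)=\Tr_1^m(\lambda+\lambda^{2^k})=0$, and the identification of the kernel $\{0,1,\mu,\mu+1\}$ when $\mu+\overline{\mu}=\lambda$ are all accurate, and they do force $N(\tau,\nu)\in\{0,2,4\}$.

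The two computations you defer genuinely close, and in the way you predict, so there is no structural gap. For the adjoint route in case (2): with respect to the pairing $\Tr_1^n(xy)$ one finds
\begin{equation*}
L_0^{*}(y)=y^{2^{n-k}}+\overline{\tau}\,\overline{y}+(\tau+1)y,
\end{equation*}
and substituting $\mu^{2^k}=\mu+\tau\lambda$, $\overline{\mu}^{2^k}=\overline{\mu}+\overline{\tau}\lambda$, $\mu+\overline{\mu}=\lambda$ and $\lambda^{2^k-1}=1+\tau+\overline{\tau}$ gives $L_0^{*}\left(\lambda^{-2^k}\right)=L_0^{*}\left(\overline{\mu}^{2^k}\lambda^{-2^k}\right)=0$ by direct calculation; the two elements are $\ftwo$-independent (equality would force $\mu=1$, hence $\lambda=\mu+\overline{\mu}=0$, impossible), so $\mathrm{im}\,L_0$ is cut out exactly by $\Tr_1^n(\Delta)=0$ and $\Tr_1^n\left(\nu\overline{\mu}^{2^k}\lambda^{-2^k}\right)=\Tr_1^n\left(\mu^{2^k}\overline{\nu}\lambda^{-2^k}\right)=0$, the last equality by applying the $2^m$-Frobenius inside the trace — this is precisely the pair of conditions in (2). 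In the codimension-one cases your ``I expect'' is made rigorous by the counting you already set up: $\mathrm{im}\,L_0$ has $2^{n-1}$ elements and is contained (necessity via (T)) in the hyperplane $\Tr_1^n(\nu\lambda^{-2^k})=0$, hence equals it. For 1(i), note that $mk\equiv m\pmod{2m}$ because $k$ is odd, so iterating $x^{2^k}=x+A$ with $A=\tau t_0+\nu$ yields, for \emph{every} solution $x$ of the lift, $\overline{x}=x+\sum_{i=0}^{m-1}A^{2^{ki}}$; your consistency requirement $x+\overline{x}=t_0$, raised to the $2^k$-th power and using $A^{2^k}=\tau^{2^k}(\nu+\overline{\nu})+\nu^{2^k}$, is literally the displayed identity in 1(i). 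One small refinement over your sketch: in this case the solvability condition $\Tr_1^n(A)=0$ is automatic, since $\Tr_1^n(A)=\sum_{i=0}^{m-1}\left(A+\overline{A}\right)^{2^{ki}}$ telescopes to $t_0+\overline{t}_0=0$ using $A+\overline{A}=t_0+t_0^{2^k}$; so 1(i) encodes consistency alone, and the two requirements you planned to intersect collapse to one.
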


\begin{remark} When $k=1$, Lemma \ref{lemma-core} reduces to \cite[Lemma 3]{TLZ} which played a central role in computing the boomerang uniformity in the paper. Comparing with \cite[Lemma 3]{TLZ}, our criteria seems a little simpler.
\end{remark}

\section{Some identities} \label{sec3:app}

Before proceeding to the proof of the main result, in this section we collect some useful identities and relations which play important roles in the rest of the paper.

Recall the setting of Theorem \ref{1:mainthm} in Section \ref{sec1} for all the notions $m,n,k, \Gamma$, etc.

\subsection{For $\underline{c} \in \Gamma$}

We first assume that $\underline{c}=(c_0,c_1,c_2,c_3) \in \Gamma$, and the function $\ffc$ is given in \eqref{1:fx}. Since  $\te_1,\te_4 \in \ftwom^*:=\ftwom \setminus \{0\}$ and
\[\Tr_1^m\left(\frac{\te_4}{\te_1}\right)=1, \quad \Tr_1^n\left(\frac{\te_4}{\te_1}\right)=0,\]
we can find $\xi \in \ftwon$ such that
\begin{eqnarray} \label{3:xixi} \xi^{2^k}+\xi=\frac{\te_4}{\te_1}. \end{eqnarray}
We fix such an element $\xi$. It is known that
\begin{eqnarray} \label{3:xir} \xi \in \ftwon \setminus \ftwom, \quad \xi+\overline{\xi}=1. \end{eqnarray}

\begin{lem} \label{2:ids} If $\underline{c} \in \Gamma$, then we have
\begin{enumerate}
  \item [(1)] $\te_2 \ot_2+\te_3 \ot_3=\te_4(\te_1+\te_4)$;
  \item [(2)] $c_0\te_4+c_1\ot_2+c_2\te_3=0$;
  \item [(3)] $c_0\te_2+c_1(\te_1+\te_4)+c_3\te_3=0$;
  \item [(4)] $c_0\ot_3+c_2(\te_1+\te_4)+c_3\ot_2=0$;
  \item [(5)] $c_1\ot_3+c_2\te_2+c_3\te_4=0$;
  \item [(6)] $\frac{\te_2 \ot_2}{\te_1^2}=\overline{\xi}+\xi^2+\Tr^m_1 \left(\frac{\te_2 \ot_2}{\te_1^2}\right)$.
\end{enumerate}
\end{lem}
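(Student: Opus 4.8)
The plan is to treat the six identities in two groups. Identities (1)--(5) are purely formal: they hold for \emph{every} $\underline{c}\in\ftwon^{\,4}$ and make no use of the defining conditions of $\Gamma$, whereas (6) is the only one that genuinely exploits the structure of $\Gamma$ (and the element $\xi$).

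For (1)--(5) I would simply substitute the definitions of $\te_1,\dots,\te_4$ (writing $N_i:=c_i\oc_i\in\ftwom$ for the four relative norms) and expand. The mechanism is the same in every case: after expansion each mixed monomial occurs an even number of times and hence vanishes in characteristic $2$, leaving a symmetric remainder. For instance, in (1) the two cross terms $\oc_0 c_1 c_2 \oc_3$ and $c_0\oc_1\oc_2 c_3$ appear in \emph{both} $\te_2\ot_2$ and $\te_3\ot_3$ and cancel, so $\te_2\ot_2+\te_3\ot_3=(N_0+N_3)(N_1+N_2)=\te_4(\te_1+\te_4)$; in (2)--(5) each of the three resulting monomials appears exactly twice, giving $0$. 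Since these are polynomial identities in the $c_i,\oc_i$, no hypothesis on $\underline{c}$ is needed, and I would dispatch them all in one expand-and-cancel computation.

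The substantive part is (6). The idea is to feed the constraint $(\te_2/\te_1)^{2^k}=\ot_3/\te_1$ of $\Gamma$ into identity (1). That constraint rewrites as $\ot_3=\te_2^{2^k}\te_1^{\,1-2^k}$; conjugating gives $\te_3=\ot_2^{2^k}\te_1^{\,1-2^k}$ (using $\te_1\in\ftwom$), and multiplying the two yields $\te_3\ot_3=(\te_2\ot_2)^{2^k}\,\te_1^{\,2-2^{k+1}}$. Substituting this into (1) and dividing by $\te_1^2$ turns it into the clean relation
\[
t+t^{2^k}=s+s^2,\qquad t:=\frac{\te_2\ot_2}{\te_1^2}\in\ftwom,\quad s:=\frac{\te_4}{\te_1}\in\ftwom .
\]
On the other side, recall from \eqref{3:xixi} and \eqref{3:xir} that $\xi^{2^k}+\xi=s$ and $\xi+\oxi=1$. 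Setting $w:=\xi^2+\xi=\xi\oxi\in\ftwom$, one checks directly that $w+w^{2^k}=s+s^2$ as well. Hence $(t+w)^{2^k}=t+w$, so $t+w\in\ftwo$ since $\gcd(k,n)=1$; that is, $w=t+\varepsilon$ for some $\varepsilon\in\{0,1\}$.

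It remains to pin down $\varepsilon$, and this is where I expect the only real subtlety. The key observation is that $w=\xi\oxi$ is the relative norm of $\xi$ and that $X^2+X+w$ is the minimal polynomial of $\xi$ over $\ftwom$; since $\xi\in\ftwon\setminus\ftwom$ this polynomial is irreducible over $\ftwom$, so Lemma \ref{lem0-0} (with $a=1$) forces $\Tr_1^m(w)=1$. Applying $\Tr_1^m$ to $w=t+\varepsilon$ and using that $m$ is odd (so $\Tr_1^m(1)=1$) gives $1=\Tr_1^m(t)+\varepsilon$, i.e. $\varepsilon=1+\Tr_1^m(t)$. Substituting back and rewriting $\xi+1=\oxi$ yields
\[
t=w+1+\Tr_1^m(t)=\xi^2+\oxi+\Tr_1^m(t)=\oxi+\xi^2+\Tr_1^m\!\left(\frac{\te_2\ot_2}{\te_1^2}\right),
\]
which is exactly (6). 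The main obstacle is thus not the algebra (routine expansion plus the single substitution from condition~(3) of $\Gamma$), but correctly resolving the $\ftwo$-ambiguity $\varepsilon$ through the trace/irreducibility argument; it is precisely this constant, together with the parity of $m$, that later makes the cases $\Gamma_0$ and $\Gamma_1$ split in the main theorem.
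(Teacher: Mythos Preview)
Your proposal is correct and follows essentially the same route as the paper: for (1)--(5) the paper likewise dismisses them as a routine expansion, and for (6) both proofs substitute the $\Gamma$-condition $(\te_2/\te_1)^{2^k}=\ot_3/\te_1$ into identity (1) to reduce to $t^{2^k}+t=s+s^2$ with $s=\te_4/\te_1$, then identify the two candidate solutions $\xi+\xi^2$ and $\oxi+\xi^2$ and distinguish them by their $\Tr_1^m$-value. The only cosmetic difference is that the paper computes $\Tr_1^m(\xi+\xi^2)=\xi+\oxi=1$ by a direct telescoping sum, whereas you obtain the same fact via the irreducibility of $X^2+X+\xi\oxi$ over $\ftwom$ and Lemma~\ref{lem0-0}; both arguments are equivalent.
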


\begin{proof}
Identities (1)-(5) can be verified by a routine computation. Only (6) requires some explanation.

Since $\te_1\in\ftwom^*$, we let $t=\frac{\te_2 \ot_2}{\te_1^2}$. Dividing $\te_1^2$ on both sides of Identity (1) of Lemma \ref{2:ids} and using the relation $\left(\te_2/\te_1\right)^{2^k}=\ot_3/\te_1$, we obtain
\[t^{2^k}+t+\frac{\te_4}{\te_1}+\left(\frac{\te_4}{\te_1}\right)^2=0. \]
Since $\te_1 \ne 0$ and $\gcd\left(2^k-1,2^n-1\right)=2^{\gcd(n,k)}-1=1$, by using $\xi$ from \eqref{3:xixi}, the above equation has two roots which are given by
\[t=\xi+\xi^2 \quad \mbox{ or } \quad t=\overline{\xi}+\xi^2. \]
It is easy to see that
\[\Tr_1^m \left(\xi+\xi^2\right)=\sum_{i=0}^{m-1}\xi^{2^i}+\sum_{i=0}^{m-1}\xi^{2^{i+1}}=\xi+\overline{\xi}=1, \]
and $\Tr_1^m \left(\overline{\xi}+\xi^2\right)=\Tr_1^m \left(1+\xi+\xi^2\right)=0$. Thus $t=\overline{\xi}+\xi^2$ if $\Tr_1^m(t)=0$ and  $t=\xi+\xi^2$ if $\Tr_1^m(t)=1$. This completes the proof of (6).
\end{proof}

\subsection{For $\underline{c} \in \Gamma_0$}

Next we assume that $\underline{c} \in \Gamma_0$. First, identity (6) of Lemma \ref{2:ids} becomes
\begin{eqnarray*}
\frac{\te_2 \ot_2}{\te_1^2}=\overline{\xi}+\xi^2. 
\end{eqnarray*}
Next, for any $a \in \ftwon^*$, define
\begin{eqnarray} \label{3:M}
M(a)&:=&\te_1 a \oa +\ot_2 \oa^2+\te_2 a^2,
\end{eqnarray}
and
\begin{eqnarray} \label{ap:eta}
\eta(a)&:=& \xi a+\frac{\ot_2 \oa}{\te_1}.
\end{eqnarray}
Define $\eta^{(2)}:=\eta \circ \eta$ (this is to avoid confusion which might result from using the more standard notation $\eta^2$). It is easy to see that
\begin{eqnarray*}
\eta^{(2)}(a)&:=& \overline{\xi} a+\frac{\ot_2 \oa}{\te_1}=\eta(a)+a, \\
\eta^{(3)}(a)&:=& \eta \circ \eta^{(2)} (a)=a. \nonumber
\end{eqnarray*}
Define
\begin{eqnarray} \label{ap:za}
Z_a&:=& \left\{a, \eta(a), \eta^{(2)}(a)\right\}.
\end{eqnarray}

\begin{lem} \label{3:Mnot0}
If $\underline{c} \in \Gamma_0$, then for any $a \in \ftwon^*$, we have
\begin{itemize}
\item[(1)] $M(a) \ne 0$;
\item[(2)] $M(z)=M(a)$ for any $z \in Z_a$.
\end{itemize}
\end{lem}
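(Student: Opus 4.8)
The plan is to treat the two parts largely independently, using throughout the two facts already established for $\uc\in\Gamma_0$: the relation $\xi+\oxi=1$ from \eqref{3:xir}, and the identity $\frac{\te_2\ot_2}{\te_1^2}=\oxi+\xi^2=1+\xi+\xi^2$ coming from Lemma \ref{2:ids}(6) specialized to $\Gamma_0$. I will also record the elementary observation that $M(a)\in\ftwom$ for every $a$: applying the conjugation $x\mapsto\ox$ to \eqref{3:M} and using $\te_1\in\ftwom$ together with $\overline{\ot_2}=\te_2$ merely permutes the three terms of $M(a)$, so $\overline{M(a)}=M(a)$. The same computation shows $c:=\frac{\te_2\ot_2}{\te_1^2}\in\ftwom$, which is what lets the quadratic argument in part (1) take place over $\ftwom$.

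For part (2), since $\eta^{(3)}=\mathrm{id}$ it suffices to prove the single invariance $M(\eta(a))=M(a)$ for every $a$; applying this identity to $\eta(a)$ in place of $a$ then gives $M(\eta^{(2)}(a))=M(\eta(a))=M(a)$, covering all of $Z_a$. To prove the invariance I substitute $\eta(a)=\xi a+\frac{\ot_2}{\te_1}\oa$ and its conjugate $\overline{\eta(a)}=\oxi\,\oa+\frac{\te_2}{\te_1}a$ into \eqref{3:M}, expand (recalling that squaring is additive in characteristic $2$, so the cross terms in $\eta(a)^2$ and $\overline{\eta(a)}^2$ vanish), and collect the coefficients of $a^2$, $a\oa$ and $\oa^2$. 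Each of the three coefficients reduces, after substituting $\frac{\te_2\ot_2}{\te_1^2}=1+\xi+\xi^2$ and $\oxi=1+\xi$, to $\te_2$, $\te_1$ and $\ot_2$ respectively, that is, to the coefficients of $M(a)$; for instance the coefficient of $a^2$ equals $\te_2\xi^2+\te_2\xi+\te_2\frac{\te_2\ot_2}{\te_1^2}=\te_2(\xi^2+\xi+1+\xi+\xi^2)=\te_2$. Carrying out the same collapse for the $a\oa$ and $\oa^2$ coefficients yields $M(\eta(a))=M(a)$.

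For part (1), suppose toward a contradiction that $M(a)=0$ for some $a\in\ftwon^*$. Dividing \eqref{3:M} by $\te_1 a\oa$ (legitimate since $\te_1\in\ftwom^*$ and $a\oa=a^{2^m+1}\ne0$) gives $1+w+\overline{w}=0$, where $w:=\frac{\te_2}{\te_1}\cdot\frac{a}{\oa}$ has conjugate $\overline{w}=\frac{\ot_2}{\te_1}\cdot\frac{\oa}{a}$; thus $w+\overline{w}=1$, while $w\overline{w}=\frac{\te_2\ot_2}{\te_1^2}=c$. By Vieta $w$ is therefore a root of $t^2+t+c=0$. Since $\uc\in\Gamma_0$ we have $\Tr_1^m(c)=\Tr_1^m\!\left(\frac{\te_2\ot_2}{\te_1^2}\right)=0$, so by Lemma \ref{lem0-0} applied over $\ftwom$ both roots of $t^2+t+c$ lie in $\ftwom$; in particular $w\in\ftwom$, forcing $w+\overline{w}=0$ and contradicting $w+\overline{w}=1$. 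Hence $M(a)\ne0$.

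The routine-but-lengthy coefficient bookkeeping in part (2) is the step most prone to error and is really the bulk of the work, whereas the only genuine insight in part (1) is the substitution $w=\frac{\te_2}{\te_1}\cdot\frac{a}{\oa}$, which converts the vanishing of $M(a)$ into a quadratic over $\ftwom$ whose solvability is governed precisely by the defining trace condition of $\Gamma_0$. It is worth noting that the same computation over $\Gamma_1$ would instead produce a genuine zero of $M$, consistent with the different behaviour recorded for $\Gamma_1$ in Theorem \ref{1:mainthm}.
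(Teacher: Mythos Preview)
Your argument is correct and follows essentially the same approach as the paper: part (2) is the same direct expansion of $M(\eta(a))$ into coefficients of $a^2,\,a\oa,\,\oa^2$ and simplification via $\frac{\te_2\ot_2}{\te_1^2}=\oxi+\xi^2$ (your observation that $\eta^{(3)}=\mathrm{id}$ reduces it to a single check is a small economy over the paper's ``similarly for $\eta^{(2)}(a)$''), and part (1) is the same quadratic-over-$\ftwom$ contradiction, with your substitution $w=\frac{\te_2 a}{\te_1\oa}$ being a slightly more direct variant of the paper's $y=(\ox/x)^2$ with $x^2=\te_2 a^2$.
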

\begin{proof}
(1). Suppose $M(a)=0$ for some $a \in \ftwon^*$. Obviously $\te_2 \ne 0$. Let $x \in \ftwon^*$ be the unique element of $\ftwon$ satisfying $x^2=\te_2a^2$. Thus $a^2=x^2/\te_2$, and we obtain
\[0=M(a)^2=x^4+\ox^4+ \frac{\te_1^2}{\te_2 \ot_2} x^2 \ox^2. \]
Letting $y=\left(\ox/x\right)^2 \ne 0$, we have
\begin{eqnarray} \label{ap:lam1} y^2+\frac{\te_1^2}{\te_2 \ot_2} y+1=0.\end{eqnarray}
Since $\underline{c} \in \Gamma$, we have $\Tr_1^m \left(\frac{\te_2\ot_2}{\te_1^2}\right)=0$, Lemma \ref{lem0-0} implies that  \eqref{ap:lam1} is solvable with $y \in \ftwom^*$. From $y=\overline{y}$, we find $x^4=\ox^4$, that is, $x=\ox$, and hence $y=1$. This clearly contradicts   \eqref{ap:lam1} since we know that $\te_1\te_2 \ne 0$.

(2). Let $z=\eta(a)$. We have
\[z=\xi a +\frac{\ot_2 \oa}{\te_1}, \quad \oz=\overline{\xi} \oa +\frac{\te_2 a}{\te_1},\]
and
\begin{eqnarray*}
z \oz&=&\left(\xi a +\frac{\ot_2 \oa}{\te_1}\right)
\left(\overline{\xi} \oa +\frac{\te_2 a}{\te_1}\right) \\
&=&a \oa \xi \overline{\xi}+\frac{1}{\te_1} \left(\te_2 \xi a^2+\ot_2 \overline{\xi} \oa^2\right)+\frac{\te_2\ot_2 a \oa}{\te_1^2}.
\end{eqnarray*}
With some computation, we can obtain that
\begin{eqnarray*}
M(z)&=&a \oa \left(\te_1 \xi \overline{\xi}+\frac{\te_2\ot_2}{\te_1}\right)+\oa^2 \left(\ot_2 \oxi+\ot_2 \oxi^2+\frac{\te_2\ot_2^2}{\te_1^2}\right)+a^2 \left(\te_2 \xi+\te_2 \xi^2+\frac{\ot_2\te_2^2}{\te_1^2}\right).
\end{eqnarray*}
Using (6) of Lemma \ref{2:ids} and the properties of $\xi$ given in \eqref{3:xixi} and \eqref{3:xir}, we can verify that
\[\te_1 \xi \overline{\xi}+\frac{\te_2\ot_2}{\te_1}=\te_1\left(\xi \oxi+\oxi+\xi^2\right)=\te_1,\]
\[\te_2 \xi+\te_2 \xi^2+\frac{\ot_2\te_2^2}{\te_1^2}=\te_2 \left(\xi+\xi^2+\oxi+\xi^2\right)=\te_2.\]
This clearly shows that $M(z)=M(a)$.

Similarly, by taking $z=\eta^{(2)}(a)$, one can also verify that $M(z)=M(a)$. This completes the proof of (2). Now Lemma \ref{3:Mnot0} is proved.
\end{proof}

\begin{lem} \label{ap:idf}
If $\underline{c} \in \Gamma_0$, then for any $z \in \ftwon^*$, we have the identity
\begin{eqnarray} \label{ap:idf1}
\eta^{(2)}(z)^{2^k} \left(c_2 \overline{\eta(z)}+c_3 \eta(z)\right)+\overline{\eta^{(2)}(z)}^{2^k} \left(c_0 \overline{\eta(z)}+c_1 \eta(z)\right)=\ffc(z).
\end{eqnarray}
\end{lem}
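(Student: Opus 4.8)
The plan is to establish \eqref{ap:idf1} by brute expansion followed by matching of coefficients. Set $s:=\ot_2/\te_1$, so that $\os=\te_2/\te_1$ (recall $\te_1\in\ftwom$), and from \eqref{ap:eta} write $\eta(z)=\xi z+s\oz$ and $\eta^{(2)}(z)=\oxi z+s\oz$, whence $\overline{\eta(z)}=\oxi\,\oz+\os z$ and $\overline{\eta^{(2)}(z)}=\xi\oz+\os z$. Before expanding I would record the Frobenius relations the hypotheses furnish: from \eqref{3:xixi} together with $\te_4\in\ftwom$ one gets $\xi^{2^k}=\xi+\te_4/\te_1$ and $\oxi^{2^k}=\oxi+\te_4/\te_1$, while the defining condition $(\te_2/\te_1)^{2^k}=\ot_3/\te_1$ of $\Gamma$ gives $\os^{2^k}=\ot_3/\te_1$ and, by conjugation, $s^{2^k}=\te_3/\te_1$. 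I will also use $\xi+\oxi=1$, the consequence $\xi\oxi=\xi^2+\xi$ in characteristic two, and the $\Gamma_0$-form of identity (6) of Lemma \ref{2:ids}, namely $\te_2\ot_2/\te_1^2=\oxi+\xi^2=\xi^2+\xi+1$.

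Expanding $\eta^{(2)}(z)^{2^k}$ and $\overline{\eta^{(2)}(z)}^{2^k}$ by additivity of the $2^k$-power map and multiplying out, the left-hand side of \eqref{ap:idf1} becomes a linear combination of the four monomials $z^{2^k+1}$, $z^{2^k}\oz$, $\oz^{2^k}z$ and $\oz^{2^k+1}$; it then suffices to show their coefficients are $c_3,c_2,c_1,c_0$ respectively, which reproduces $\ffc(z)$. Taking the first as a model, the coefficient of $z^{2^k+1}$ equals $\oxi^{2^k}(c_2\os+c_3\xi)+\os^{2^k}(c_0\os+c_1\xi)$; substituting the Frobenius relations and $\oxi=\xi+1$, $\xi\oxi=\xi^2+\xi$ turns this into an expression of the shape $c_3\xi^2+A\,\xi+B$. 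Identity (5) of Lemma \ref{2:ids} forces $A=c_3$, after which the residue $c_3(\xi^2+\xi+1)+B$ is seen to vanish once one replaces $\xi^2+\xi+1=\te_2\ot_2/\te_1^2$ and invokes identity (4). The coefficient of $z^{2^k}\oz$ is handled in exactly the same way, again through identities (4) and (5), and yields $c_2$.

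For the two remaining ($\oz$-heavy) coefficients the same mechanism works verbatim, but now driven by identities (2) and (3): the coefficient of $\oz^{2^k+1}$ reduces to $c_0(\xi^2+\xi+1)+B'$, where identity (2) pins down the linear-in-$\xi$ part and identities (3) and (2) together clear $B'$. Rather than redo this, I would exploit a conjugation symmetry to get both $\oz$-heavy coefficients for free: a direct check shows that the substitution $\underline{c}\mapsto\underline{c}':=(\oc_3,\oc_2,\oc_1,\oc_0)$ leaves each $\te_i$ (hence $\xi$ and $s$) invariant and preserves $\Gamma_0$, and that applying $x\mapsto\ox$ to the expanded identity carries the coefficient of $z^{2^k+1}$ for $\underline{c}$ to that of $\oz^{2^k+1}$ for $\underline{c}'$ (and $z^{2^k}\oz$ to $\oz^{2^k}z$); since the substitution is an involution, verifying the two $z$-heavy coefficients yields the other two. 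The only genuine difficulty here is organizational: each coefficient is a quadratic in $\xi$ whose collapse to a single $c_i$ requires a \emph{coordinated} use of two of the linear relations (2)--(5) together with identity (6), so the main care goes into tracking the conjugate pairs $(\te_2,\ot_2)$ and $(\te_3,\ot_3)$ correctly throughout.
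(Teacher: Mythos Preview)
Your proposal is correct and follows the paper's strategy: expand the left-hand side into a combination of the four monomials $\oz^{2^k+1},\oz^{2^k}z,\oz z^{2^k},z^{2^k+1}$ and reduce each coefficient to the corresponding $c_i$ using the Frobenius relations for $\xi$ and $s$ together with identities (2)--(6) of Lemma~\ref{2:ids}. Your conjugation shortcut (that $\underline{c}\mapsto(\oc_3,\oc_2,\oc_1,\oc_0)$ fixes all $\theta_i$, hence $\xi$ and $s$, and swaps the two pairs of coefficients under $x\mapsto\ox$) is a valid labour-saving device the paper does not use; the paper simply computes all four coefficients by hand, invoking (2)--(3) for the $\oz$-heavy pair and (4)--(5) for the $z$-heavy pair.
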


\begin{proof}
For simplicity, denote
\[r:=\xi, \quad s:=\frac{\te_2}{\te_1}. \]
Then
\[\eta(z)=rz +\os \, \oz, \quad \eta^{(2)}(z)=\orr z+ \os \, \oz. \]
The left hand side of \eqref{ap:idf1} is given by
\begin{eqnarray*}
\mbox{LHS}&=&\left(\orr z+\os \, \oz\right)^{2^k} \left(c_2 \left(\orr \, \oz+ sz\right)+c_3 \left(r z+\os \, \oz\right) \right) \\
&& + \left(r \oz+s z\right)^{2^k} \left(c_0 \left(\orr \, \oz+ sz\right)+c_1 \left(r z+\os \, \oz\right) \right)\\
&=:& A \oz^{2^k+1}+B \oz^{2^k}z+C \oz z^{2^k}+D z^{2^k+1},
\end{eqnarray*}
where the coefficients $A,B,C,D$ are given by
\begin{eqnarray*}
A&=& \os^{2^k} \left(c_2 \orr+ c_3 \os\right)+ r^{2^k} \left(c_0 \orr+c_1 \os\right)=\orr\left(c_2 \os^{2^k}+ c_0 r^{2^k}\right)+ \os \left(c_3 \os^{2^k}+c_1 r^{2^k}\right), \\
B&=& \os^{2^k} \left(c_2 s+ c_3 r\right)+ r^{2^k} \left(c_0 s+c_1 r\right)=s \left(c_2 \os^{2^k}+ c_0 r^{2^k}\right)+ r \left(c_3\os^{2^k} +c_1 r^{2^k}\right), \\
C&=& \orr^{2^k} \left(c_2 \orr+ c_3 \os\right)+ s^{2^k} \left(c_0 \orr+c_1 \os\right)=\orr \left(c_2 \orr^{2^k}+ c_0 s^{2^k}\right)+ \os \left(c_3 \orr^{2^k}+c_1 s^{2^k}\right), \\
D&=& \orr^{2^k} \left(c_2 s+ c_3 r\right)+ s^{2^k} \left(c_0 s+c_1 r\right)=s \left(c_2 \orr^{2^k}+ c_0 s^{2^k}\right)+ r \left(c_3 \orr^{2^k}+c_1 s^{2^k}\right).
\end{eqnarray*}
We claim that $A=c_0,B=c_1,C=c_2$ and $D=c_3$. For $A$ and $B$, using the relations
\begin{eqnarray} r=\xi, \quad s=\frac{\te_2}{\te_1}, \quad \left(\frac{\ot_2}{\te_1}\right)^{2^k}=\frac{\te_3}{\te_1},\nonumber \\
\label{ap:xih} \xi^{2^k}+\xi=\frac{\te_4}{\te_1}, \quad \frac{\te_2 \ot_2}{\te_1^2}=1+\xi \oxi, \end{eqnarray}
and recalling (2)-(3) of Lemma \ref{2:ids}, we can obtain
\begin{eqnarray*}
c_2 \os^{2^k}+ c_0 r^{2^k}&=&c_2 \left(\frac{\te_3}{\te_1}\right)+c_0 \left(\frac{\te_4}{\te_1}+\xi\right)\\
&=& \frac{c_2 \te_3+c_0\te_4}{\te_1}+c_0 \xi=\frac{c_1 \ot_2}{\te_1}+c_0 \xi, \\
c_3 \os^{2^k}+c_1 r^{2^k}&=&c_3 \left(\frac{\te_3}{\te_1}\right)+c_1 \left(\frac{\te_4}{\te_1}+\xi\right)\\
&=& \frac{c_3 \te_3+c_1\left(\te_4+\te_1\right)}{\te_1}+c_1 \left(1+\xi\right)=\frac{c_0 \te_2}{\te_1}+c_1 \oxi.
\end{eqnarray*}
From the above identities and also using \eqref{ap:xih}, we can easily verify that
\begin{eqnarray*}
A&=& \oxi \left(\frac{c_1 \ot_2}{\te_1}+c_0 \xi\right)+\frac{\ot_2}{\te_1} \left(\frac{c_0 \te_2}{\te_1}+c_1 \oxi\right)=c_0, \\
B&=& \frac{\te_2}{\te_1} \left(\frac{c_1 \ot_2}{\te_1}+c_0 \xi\right)+\xi \left(\frac{c_0 \te_2}{\te_1}+c_1 \oxi\right)=c_1.
\end{eqnarray*}
As for $C$ and $D$, using (4)-(5) of Lemma \ref{2:ids}, we can obtain
\begin{eqnarray*}
c_2 \orr^{2^k}+ c_0 s^{2^k}&=&c_2 \left(\oxi+\frac{\te_4}{\te_1}\right)+c_0 \left(\frac{\ot_3}{\te_1}\right)\\
&=& \frac{c_2 \left(\te_1+\te_4\right)+c_0\ot_3}{\te_1}+c_2 \xi=\frac{c_3 \ot_2}{\te_1}+c_2 \xi, \\
c_3 \orr^{2^k}+c_1 s^{2^k}&=&c_3 \left(\oxi+\frac{\te_4}{\te_1}\right)+c_1 \left(\frac{\ot_3}{\te_1}\right)\\
&=& \frac{c_3 \te_4+c_1\ot_3}{\te_1}+c_3\oxi=\frac{c_2 \te_2}{\te_1}+c_3 \oxi.
\end{eqnarray*}
Then can easily verify that
\begin{eqnarray*}
C&=& \oxi \left(\frac{c_3 \ot_2}{\te_1}+c_2 \xi\right)+\frac{\ot_2}{\te_1} \left(\frac{c_2 \te_2}{\te_1}+c_3 \oxi\right)=c_2, \\
D&=& \frac{\te_2}{\te_1} \left(\frac{c_3 \ot_2}{\te_1}+c_2 \xi\right)+\xi \left(\frac{c_2 \te_2}{\te_1}+c_3 \oxi\right)=c_3.
\end{eqnarray*}
This shows that the left hand side of \eqref{ap:idf1} equals $\ffc(z)$. This completes the proof of Lemma \ref{ap:idf}.
\end{proof}



\section{Solving $\ffc(x+a)+\ffc(x)=b$} \label{sec4:diff}


Now to prove our main result, for any $\underline{c} \in \Gamma$, we first study for any $a\in\ftwon^*$, $b\in\ftwon$ the equation
\begin{equation}\label{3:equation0}
\ffc(x+a)+\ffc(x)=b.
\end{equation}
Here $\ffc(x)$ is given by \eqref{1:fx}. Denote
\begin{eqnarray} \label{3:ha}
H_a(x):=\ffc(x+a)+\ffc(x)+\ffc(a).
\end{eqnarray}
Since $\ffc(x)$ is a quadratic polynomial, we have
\begin{equation*}
H_a(x)=\tau_{1}'\overline{x}^{2^k}+\tau_{2}'x^{2^k}+\tau_{3}'\overline{x}+\tau_{4}'x,
\end{equation*}
where
\begin{eqnarray*}
\begin{array}{llllll}
  \tau_{1}'&=&c_{0}\overline{a}+c_{1}a, &\tau_{2}'&=&c_{2}\oa +c_{3}a, \\
 \tau_{3}'&=&c_{0}\oa^{2^k}+c_{2}a^{2^k},  &\tau_{4}'&=&c_{1}\oa^{2^k} + c_{3}a^{2^k}.
\end{array}
\end{eqnarray*}
Equation \eqref{3:equation0} becomes
\begin{eqnarray} \label{3:hab} H_a(x)=\ffc(a)+b.\end{eqnarray}
Substituting $x$ with $ax$, the above becomes
\begin{equation} \label{equation1}
\tau_{1}\overline{x}^{2^k}+\tau_{2}x^{2^k}+\tau_{3}\overline{x}+\tau_{4}x+\tau_{5}= 0
\end{equation}
where $\tau_{5}=\ffc(a)+b$ and other $\tau_i$'s are given by
\begin{eqnarray*}
\begin{array}{llllll}
  \tau_{1}&=&\left(c_{0}\oa+c_{1}a\right)\oa^{2^k}, &\tau_{2}&=&\left(c_{2}\oa +c_{3}a\right)a^{2^k}, \\
 \tau_{3}&=&\left(c_{0}\oa^{2^k}+c_{2}a^{2^k}\right)\overline{a},  &\tau_{4}&=&\left(c_{1}\oa^{2^k} + c_{3}a^{2^k}\right)a.
\end{array}
\end{eqnarray*}
Taking $2^m$-th power on both sides of \eqref{equation1} gives
\begin{eqnarray} \label{equation1-1}
\overline{\tau}_{1}x^{2^k}+\overline{\tau}_{2}\overline{x}^{2^k}+
\overline{\tau}_{3}x+\overline{\tau}_{4}\overline{x}+\overline{\tau}_{5}= 0,
\end{eqnarray}
then by $\overline{\tau}_{2}\cdot \eqref{equation1}+\tau_{1}\cdot \eqref{equation1-1}$ one has
\begin{equation} \label{equation2}
v_{1}x^{2^k}+v_{2}\overline{x}+v_{3}x+v_{4} = 0,
\end{equation}
 where
\begin{eqnarray} \label{3q:vis}
\begin{array}{llllll}
 v_{1}&=&\tau_{1}\overline{\tau}_{1}+\tau_{2}\overline{\tau}_{2}, &v_2&=&\tau_{1}\overline{\tau}_{4}+\overline{\tau}_{2}\tau_{3}, \\
 v_{3}&=&\tau_{1}\overline{\tau}_{3}+\overline{\tau}_{2}\tau_{4},  &v_4&=&\tau_{1}\overline{\tau}_{5}+\overline{\tau}_{2}\tau_{5}.
\end{array}
\end{eqnarray}
It is easy to verify that the $v_i$'s and $\tau_i$'s satisfy the following properties:
\begin{itemize}
\item[(1)] $v_1+v_2+v_3=0$;
\item[(2)] $\tau_1+\tau_2=\tau_3+\tau_4=\ffc(a)$;
\item[(3)] $v_{4}=v_{1}+\tau_{1}\overline{b}+\overline{\tau}_{2}b$.
\item[(4)] $\tau_{1}\overline{v}_{3}+\tau_{2}v_{2}+\tau_{3}v_{1}=
    \tau_{1}\overline{v}_{2}+\tau_{2}v_{3}+\tau_{4}v_{1}=\tau_{1}\overline{v}_{4}+\tau_{2}v_{4}+\tau_{5}v_{1}=0$.
\end{itemize}
By some straightforward computation, we can obtain the values of the $v_i$'s as follows:
\begin{eqnarray}
\label{3:v1}
v_1&=& \left(a \oa\right)^{2^k} M(a),\\
\label{3:v2}
v_2&=& \left(a \oa\right)^{2^k+1} \left(\te_4+\left(\oa/a\right)^{2^k} \te_3+ \left(\oa/a\right)\ot_2\right),\\
\label{3:v3}
v_3&=& \left(a \oa\right)^{2^k+1} \left(\te_1+\te_4+\left(\oa/a\right)^{2^k} \te_3+ \left(a/\oa\right)\te_2\right),\\
\label{3:v4}
v_4&=&v_1+ \oa^{2^k} \left(\left(\oc_2 a+\oc_3 \oa\right)b+\left(c_0 \oa+c_1 a\right)\ob\right),
\end{eqnarray}
where $M(a)$ is defined in \eqref{3:M}.

\subsection{Case 1: $v_1 \ne 0$}

If $v_1 \ne 0$, then we can write \eqref{equation2} as
\begin{equation} \label{differential-equation}
x^{2^k}+\frac{v_2}{v_1} \, \overline{x}+\left(1+\frac{v_2}{v_1}\right)x+\frac{\tau_{1}\overline{b}+
\overline{\tau}_{2}b}{v_{1}} +1=0.
\end{equation}
It turns out that if $v_1 \ne 0$, \eqref{equation1}  and  \eqref{differential-equation} are equivalent with each other.
\begin{lem} \label{lem-4}
If $v_{1}\neq 0$, then \eqref{equation1}  and  \eqref{differential-equation} have the same set of solutions in $\ftwon$.
\end{lem}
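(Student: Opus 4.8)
The plan is to establish the two inclusions between the solution sets separately: the forward one (every solution of \eqref{equation1} solves \eqref{differential-equation}) is routine, while the reverse one is the heart of the matter. Throughout, let $R(x)$ denote the left-hand side of \eqref{equation1} and $P(x)$ that of \eqref{equation2}. The forward direction is immediate from the fact that conjugation $x\mapsto\overline{x}=x^{2^m}$ is a field automorphism of $\ftwon$ fixing $\ftwom$ pointwise, with $\overline{\overline{x}}=x$ and $\overline{x^{2^k}}=\overline{x}^{2^k}$. Applying it term by term to \eqref{equation1} shows that \eqref{equation1-1} is precisely the conjugated equation $\overline{R(x)}=0$; hence any $x\in\ftwon$ solving \eqref{equation1} also solves \eqref{equation1-1}, and therefore solves the combination $\overline{\tau}_2\cdot\eqref{equation1}+\tau_1\cdot\eqref{equation1-1}$, which is \eqref{equation2}. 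Since $v_1\ne 0$, dividing \eqref{equation2} by $v_1$ and substituting the relations $v_3=v_1+v_2$ (property (1)) and $v_4=v_1+\tau_1\overline{b}+\overline{\tau}_2 b$ (property (3)) recovers exactly \eqref{differential-equation}; so \eqref{equation2} and \eqref{differential-equation} share the same solutions, and every solution of \eqref{equation1} is one of them.

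For the reverse inclusion I would reconstruct $R$ from $P$ and its conjugate. Set $Q(x):=\overline{P(x)}=v_1\overline{x}^{2^k}+\overline{v}_2 x+\overline{v}_3\overline{x}+\overline{v}_4$, using that $v_1=\overline{v}_1\in\ftwom$. Forming $\tau_2 P(x)+\tau_1 Q(x)$ and collecting terms, the $x^{2^k}$ and $\overline{x}^{2^k}$ contributions carry coefficients $\tau_2 v_1$ and $\tau_1 v_1$, while the coefficients of $\overline{x}$, of $x$, and the constant term are $\tau_1\overline{v}_3+\tau_2 v_2$, $\tau_1\overline{v}_2+\tau_2 v_3$, and $\tau_1\overline{v}_4+\tau_2 v_4$. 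This is exactly where property (4) carries all the weight: it identifies these three expressions with $\tau_3 v_1$, $\tau_4 v_1$, $\tau_5 v_1$ respectively, yielding the clean identity
\[ \tau_2 P(x)+\tau_1 Q(x)=v_1\left(\tau_1\overline{x}^{2^k}+\tau_2 x^{2^k}+\tau_3\overline{x}+\tau_4 x+\tau_5\right)=v_1 R(x). \]

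With this identity in hand the reverse inclusion follows at once: if $x_0\in\ftwon$ solves \eqref{differential-equation}, equivalently \eqref{equation2}, then $P(x_0)=0$, and conjugating gives $Q(x_0)=\overline{P(x_0)}=0$ as well; hence $v_1 R(x_0)=\tau_2 P(x_0)+\tau_1 Q(x_0)=0$, and since $v_1\ne 0$ we conclude $R(x_0)=0$, i.e. $x_0$ solves \eqref{equation1}. The two inclusions together prove the lemma. The single step bearing all the content is the displayed identity, which rests solely on property (4); everything else is formal manipulation and one use of the hypothesis $v_1\ne 0$ at the very end. I would therefore take property (4) as already established (it is asserted to hold by direct computation), or, if a self-contained argument is wanted, verify it from the definitions \eqref{3q:vis} of the $v_i$'s together with $\tau_1+\tau_2=\tau_3+\tau_4=\ffc(a)$ (property (2)) and the symmetry $v_1\in\ftwom$. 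No case distinction on $a$ or $b$ is needed.
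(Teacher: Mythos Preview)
Your proof is correct and follows essentially the same approach as the paper: both arguments reduce the reverse implication to property~(4), the paper by substituting $x^{2^k}=(v_2\overline{x}+v_3x+v_4)/v_1$ and its conjugate into the left side of \eqref{equation1}, you by writing the equivalent polynomial identity $\tau_2 P(x)+\tau_1 Q(x)=v_1 R(x)$. The paper treats the forward direction as implicit in the derivation of \eqref{equation2}, while you spell it out, but the content is identical.
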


\begin{proof}
It suffices to show that \eqref{equation1} can be derived from \eqref{differential-equation}. Noting that $v_1=\overline{v}_1$, by using \eqref{equation2}, we obtain
\begin{equation*}
x^{2^k}=\frac{v_{2}\overline{x}+v_{3}x+v_{4}}{v_{1}} {\;\;\rm and\;\;} \overline{x}^{2^k}=\frac{\overline{v}_{2}x+\overline{v}_{3}\overline{x}+\overline{v}_{4}}{v_{1}}.
\end{equation*}
Then we can compute
\begin{eqnarray*}
&&\tau_{1}\overline{x}^{2^k}+\tau_{2}x^{2^k}+\tau_{3}\overline{x}+\tau_{4}x+\tau_{5}\\
&=&\frac{\tau_{1}}{v_{1}}(\overline{v}_{2}x+\overline{v}_{3}\overline{x}+\overline{v}_{4})
 +\frac{\tau_{2}}{v_{1}}(v_{2}\overline{x}+v_{3}x+v_{4})
 +(\tau_{3}\overline{x}+\tau_{4}x+\tau_{5})\\
&=&\frac{1}{v_{1}}\left[\overline{x}\left(\tau_{1}\overline{v}_{3}+\tau_{2}v_{2}+\tau_{3}v_{1}\right)
 +x(\tau_{1}\overline{v}_{2}+\tau_{2}v_{3}+\tau_{4}v_{1})
 +(\tau_{1}\overline{v}_{4}+\tau_{2}v_{4}+\tau_{5}v_{1})\right] \\
&=&0,
\end{eqnarray*}
which is \eqref{equation1}. This completes the proof.
\end{proof}

Recall from \eqref{3:xixi} that we have defined $\xi \in \ftwon \setminus \ftwom$ to be an element satisfying
\begin{eqnarray*} 
\xi^{2^k}+\xi=\frac{\te_4}{\te_1}.
\end{eqnarray*}
Now returning to \eqref{differential-equation} and comparing it with Lemma \ref{lemma-core},  we have
\begin{lem} \label{3:gen-eq}
Suppose $v_1 \ne 0$. For \eqref{differential-equation}, denote
\[\tau=\frac{v_2}{v_1}, \quad \nu=1+\frac{\tau_1 \ob+\overline{\tau}_2 b}{v_1}.  \]
Then we have
\begin{eqnarray}
\label{3:tau}
\tau&=& \frac{\te_1 a \oa}{M(a)}\left(\frac{\te_4}{\te_1}+\left(\frac{\ot_2 \oa}{\te_1 a}\right)+\left(\frac{\ot_2 \oa}{\te_1 a}\right)^{2^k}\right), \\
\label{3:nu}
\nu &=& 1+\frac{\oa^{2^k+1} \left(\oc_3 b+c_0\ob\right)+\oa^{2^k}a \left(\oc_2 b+c_1\ob\right)}{\left(a \oa\right)^{2^k}M(a)}.
\end{eqnarray}
Let $\lambda \in\ftwom$ and $\mu \in\ftwon$ be defined by the equations
\begin{eqnarray*} 
\lambda^{2^k-1}=1+\tau+\overline{\tau},\;\;\; \mu^{2^k}+\mu=\tau \lambda.
\end{eqnarray*}
Then
\begin{eqnarray} \label{3:lam}
\lambda&=&\frac{M(a)}{\te_1a \oa},
\end{eqnarray}
and $\mu$ can be taken as
\begin{eqnarray}
\label{3:mu}
\mu &=&\xi+ \frac{\ot_2\oa}{\te_1 a}.
\end{eqnarray}
Further, one has that
\begin{eqnarray*} 
1 +\tau+\overline{\tau} \ne 0, \quad \mbox{ and } \quad \mu+\overline{\mu}=\lambda,
\end{eqnarray*}
so by Lemma \ref{lemma-core}, \eqref{differential-equation} always has either 0 or 4 solutions for any $a \in \ftwon^*$ and any $b \in \ftwon$. Moreover, if $\nu=0$, then \eqref{differential-equation} always has four roots in $\ftwon$, which are given by
\[0, 1, \xi+ \frac{\ot_2\oa}{\te_1 a}, \overline{\xi}+\frac{\ot_2\oa}{\te_1 a}. \]
\end{lem}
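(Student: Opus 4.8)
The plan is to recognize that \eqref{differential-equation} is exactly an instance of the equation $L_{\tau,\nu}(x)=0$ studied in Lemma \ref{lemma-core}, with $\tau=v_2/v_1$ and $\nu=1+(\tau_1\ob+\overline{\tau}_2 b)/v_1$, and then to verify by direct substitution that the quantities in \eqref{3:lam} and \eqref{3:mu} are genuinely the $\lambda,\mu$ attached to this $\tau,\nu$ by Lemma \ref{lemma-core}. First I would establish the closed forms \eqref{3:tau} and \eqref{3:nu}. For \eqref{3:tau} I divide the expression \eqref{3:v2} for $v_2$ by \eqref{3:v1} for $v_1$, cancel the common factor $(a\oa)^{2^k}$, and rewrite the $\te_3$-term using the defining relation $(\ot_2/\te_1)^{2^k}=\te_3/\te_1$ of $\Gamma$, obtained from the third condition in \eqref{1:Gamma} by applying the conjugation $x\mapsto\ox$. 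For \eqref{3:nu} I substitute the explicit $\tau_1,\overline{\tau}_2$ and $v_1=(a\oa)^{2^k}M(a)$ (equivalently I use property (3), $v_4=v_1+\tau_1\ob+\overline{\tau}_2 b$, together with \eqref{3:v4}) and simplify.

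The heart of the argument is the verification of \eqref{3:lam} and \eqref{3:mu}. Rather than computing $\lambda^{2^k-1}$ directly, I would dispose of the two ``easy'' relations and then deduce the $\lambda$-relation from them by a Frobenius-sum trick. Set $\lambda:=M(a)/(\te_1 a\oa)$ and $\mu:=\xi+\ot_2\oa/(\te_1 a)$. Since $\te_1,M(a),a\oa\in\ftwom$, one has $\lambda\in\ftwom$, and using $\xi+\oxi=1$ from \eqref{3:xir} together with the definition \eqref{3:M} of $M(a)$ one checks at once that $\mu+\overline{\mu}=(\xi+\oxi)+\frac{\ot_2\oa^2+\te_2 a^2}{\te_1 a\oa}=\frac{\te_1 a\oa+\ot_2\oa^2+\te_2 a^2}{\te_1 a\oa}=\frac{M(a)}{\te_1 a\oa}=\lambda$. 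Next, using $\xi^{2^k}+\xi=\te_4/\te_1$ from \eqref{3:xixi} and again $(\ot_2/\te_1)^{2^k}=\te_3/\te_1$, a short computation gives $\mu^{2^k}+\mu=\te_1^{-1}\left(\te_4+\te_3(\oa/a)^{2^k}+\ot_2(\oa/a)\right)$, which is precisely $\tau\lambda$ in view of \eqref{3:tau}. Now taking conjugates of $\mu^{2^k}+\mu=\tau\lambda$ (recalling $\overline{\lambda}=\lambda$) and adding gives $\lambda^{2^k}+\lambda=(\mu+\overline{\mu})^{2^k}+(\mu+\overline{\mu})=(\tau+\overline{\tau})\lambda$, whence $\lambda^{2^k-1}=1+\tau+\overline{\tau}$ after dividing by $\lambda$. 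This both identifies $\lambda,\mu$ with those of Lemma \ref{lemma-core} and shows $1+\tau+\overline{\tau}=\lambda^{2^k-1}\ne 0$, the nonvanishing being guaranteed because $v_1\ne 0$ forces $M(a)\ne 0$, hence $\lambda\ne 0$.

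Finally I would feed these facts into Lemma \ref{lemma-core}. Since $1+\tau+\overline{\tau}\ne 0$, case (1)(i) of that lemma cannot produce $N(\tau,\nu)=2$; and since $\mu+\overline{\mu}=\lambda\ne\lambda+1$, case (1)(ii) cannot either, so $N(\tau,\nu)\in\{0,4\}$, which is the ``$0$ or $4$ solutions'' assertion. For the last statement, when $\nu=0$ the hypotheses $\nu=0$, $1+\tau+\overline{\tau}\ne 0$ and $\mu+\overline{\mu}=\lambda$ of the concluding clause of Lemma \ref{lemma-core} are all in force, so the solution set is $\{0,1,\mu,\mu+1\}$; using $\oxi=\xi+1$ this rewrites as $0,1,\xi+\ot_2\oa/(\te_1 a),\oxi+\ot_2\oa/(\te_1 a)$, as claimed.

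I expect the main obstacle to be purely the algebraic bookkeeping in verifying $\mu^{2^k}+\mu=\tau\lambda$ and the closed form \eqref{3:tau}: one must track the conjugation $x\mapsto\ox$ carefully through the $\te_i$'s (noting $\te_1,\te_4\in\ftwom$, while $\te_2,\te_3$ are interchanged with $\ot_2,\ot_3$) and insert the relations $(\ot_2/\te_1)^{2^k}=\te_3/\te_1$ and $\xi^{2^k}+\xi=\te_4/\te_1$ in the right places. The conceptual steps---recognizing the $L_{\tau,\nu}$ form, the Frobenius-sum derivation of the $\lambda$-relation, and the elimination of the $N=2$ case via $\mu+\overline{\mu}=\lambda$---are short once these formulas are in hand.
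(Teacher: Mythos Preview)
Your proposal is correct and follows essentially the same strategy as the paper: verify the closed forms for $\tau,\nu$ by dividing \eqref{3:v2} (resp.\ \eqref{3:v4}) by \eqref{3:v1} and using $(\ot_2/\te_1)^{2^k}=\te_3/\te_1$, identify $\lambda,\mu$, and then invoke Lemma~\ref{lemma-core}. The only noteworthy difference is the order in which you establish the three relations among $\lambda,\mu,\tau$: the paper computes $1+\tau+\overline{\tau}$ directly and simplifies it to $(M(a)/(\te_1 a\oa))^{2^k-1}$ (a few lines of algebra exploiting $M(a)/(\te_1 a\oa)\in\ftwom$), then reads off $\lambda$ via $\gcd(2^k-1,2^n-1)=1$, and only afterwards checks $\mu^{2^k}+\mu=\tau\lambda$ and $\mu+\overline{\mu}=\lambda$; you instead verify the two $\mu$-relations first and then derive $\lambda^{2^k-1}=1+\tau+\overline{\tau}$ by conjugating $\mu^{2^k}+\mu=\tau\lambda$ and adding. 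Your route is a genuine economy---it replaces the paper's explicit computation of $1+\tau+\overline{\tau}$ by a one-line deduction---at the cost of needing $\lambda\ne 0$ up front (which you correctly get from $v_1\ne 0\Rightarrow M(a)\ne 0$), whereas the paper's direct computation yields $1+\tau+\overline{\tau}\ne 0$ and the value of $\lambda$ simultaneously.
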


\begin{proof}
All of the above facts can be verified easily with some careful computation. First, by \eqref{3:v1} and \eqref{3:v2} we have
\begin{align}
\tau=&\frac{v_{2}}{v_{1}}=\frac{\te_1 a \oa}{M(a)}\left(\frac{\te_4}{\te_1}+\left(\frac{\oa}{ a}\right)^{2^k} \frac{\te_3}{\te_1}+\frac{\ot_2 \oa}{\te_1 a}\right).  \nonumber
\end{align}
Then the value of $\tau$ in \eqref{3:tau} follows from the relation $\te_3/\te_1=\left(\ot_2/\te_1\right)^{2^k}$.

Second, using the value of $\tau$ in \eqref{3:tau} and noting that $\te_1,\te_4, M(a) \in \ftwom$, one gets
\begin{eqnarray*} 
1+\tau+\overline{\tau}
&=&1+\frac{\te_1 a \oa}{M(a)}\left(\left(\frac{\ot_2 \oa}{\te_1 a}+\frac{\te_2 a}{\te_1 \oa}\right)+\left(\frac{\ot_2 \oa}{\te_1 a}+\frac{\te_2 a}{\te_1 \oa}\right)^{2^k}\right) \nonumber \\
&=&1+ \frac{\te_1 a \oa}{M(a)}\left(\frac{M(a)}{\te_1 a \oa}+1+\left(\frac{M(a)}{\te_1 a \oa}+1\right)^{2^k}\right) \nonumber\\
&=&\left(\frac{M(a)}{\te_1 a \oa}\right)^{2^k-1}.
\end{eqnarray*}
Now \eqref{3:lam} is clear due to $\gcd(2^k-1,2^n-1)=1$. Third, from \eqref{3:tau} and \eqref{3:lam} one has
\begin{eqnarray*}
\tau \lambda&=&\frac{\te_4}{\te_1}+\left(\frac{\ot_2 \oa}{\te_1 a}\right)+\left(\frac{\ot_2 \oa}{\te_1 a}\right)^{2^k}.
\end{eqnarray*}
From the value $\xi$ given in \eqref{3:xixi}, it is clear that the value $\mu$ given by \eqref{3:mu} is a solution to the equation $\mu^{2^k}+\mu=\tau \lambda$. Using this value of $\mu$ and noting that $\xi+\overline{\xi}=1$, one can easily verify that $\mu+\overline{\mu}=\lambda$. This completes the proof of Lemma \ref{3:gen-eq}.
\end{proof}

\subsection{Case 2: $v_1 =0$}
If $v_1=0$, then we have:
\begin{lem} \label{geq:lem} Suppose $v_1=0$. Then
\begin{itemize}
\item[(1)] $\underline{c} \in \Gamma_1$, $\te_1\te_2\te_3\te_4 \ne 0$ and $\frac{\te_2\overline{\te}_2}{\te_1^2}=\xi+\xi^2$;
\item[(2)] $\frac{\overline{a}\overline{\te}_2}{a\te_1}=\xi$ or $\overline{\xi}$;
\item[(3)] $v_2=v_3=0$;
\item[(4)] If $\tau_1 \ne 0$, then
\begin{eqnarray}
\label{geq:t3t1}
\frac{\tau_3}{\tau_1} &=& \left(\frac{\te_1 a}{\ot_2 \overline{a}}\right)^{2^k-1}, \quad
\frac{\tau_2}{\tau_1} = \left(\frac{\te_2 a^2}{\ot_2 \overline{a}^2}\right)^{2^k}.
\end{eqnarray}
\end{itemize}
\end{lem}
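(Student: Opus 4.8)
The plan is to read off (1) and (2) from the contrapositive of Lemma~\ref{3:Mnot0} together with the root structure already isolated in Lemma~\ref{2:ids}, and then to obtain (3) and (4) by substituting the resulting value of $\oa/a$. Since $a\ne 0$, \eqref{3:v1} shows $v_1=0$ if and only if $M(a)=0$; by Lemma~\ref{3:Mnot0}(1) this cannot happen when $\underline{c}\in\Gamma_0$, so from $\Gamma=\Gamma_0\cup\Gamma_1$ we must have $\underline{c}\in\Gamma_1$. For such $\underline{c}$, identity~(6) of Lemma~\ref{2:ids} together with $\Tr_1^m(\te_2\ot_2/\te_1^2)=1$ and $\xi+\oxi=1$ gives $\te_2\ot_2/\te_1^2=\oxi+\xi^2+1=\xi+\xi^2$, the last claim of~(1). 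To prove~(2) I would divide $M(a)=0$ by $\te_1a\oa$ and set $u:=\ot_2\oa/(\te_1a)$, so that $\overline{u}=\te_2a/(\te_1\oa)$ and the equation becomes $u+\overline{u}=1$, while $u\overline{u}=\te_2\ot_2/\te_1^2=\xi+\xi^2$. Hence $u,\overline{u}$ are the two roots of $t^2+t+(\xi+\xi^2)=0$; as $t=\xi$ visibly satisfies this and the root sum equals $\xi+\oxi=1$, the roots are exactly $\xi$ and $\oxi$, so $u\in\{\xi,\oxi\}$, which is~(2). The nonvanishing $\te_1\te_2\te_3\te_4\ne 0$ then follows at once: $\te_1\ne 0$ and $\te_4\ne 0$ come from the definition of $\Gamma$ (an absolute trace equal to $1$ cannot arise from a vanishing argument); $u\ne 0$ (as $\xi,\oxi\notin\ftwom$ by \eqref{3:xir}) forces $\ot_2\ne 0$, hence $\te_2\ne 0$; and the $\Gamma$-relation $(\te_2/\te_1)^{2^k}=\ot_3/\te_1$ yields $\te_3=\ot_2^{2^k}/\te_1^{2^k-1}\ne 0$.

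For~(3) I would use property~(1) of the $v_i$'s, namely $v_1+v_2+v_3=0$: with $v_1=0$ this gives $v_2=v_3$, so it suffices to show $v_2=0$. In the bracket of \eqref{3:v2} I substitute $(\oa/a)\ot_2=\te_1u$ and, via $\te_3=\ot_2^{2^k}/\te_1^{2^k-1}$, also $(\oa/a)^{2^k}\te_3=\te_1u^{2^k}$, so that the bracket becomes $\te_4+\te_1(u^{2^k}+u)$. Since $u\in\{\xi,\oxi\}$ and both $\xi$ and $\oxi$ satisfy $t^{2^k}+t=\te_4/\te_1$ (for $\oxi$ this is obtained by applying $x\mapsto\ox$ to \eqref{3:xixi} and using $\te_4/\te_1\in\ftwom$), we get $\te_1(u^{2^k}+u)=\te_4$, the bracket vanishes, and $v_2=v_3=0$.

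The hardest part is~(4), which must connect the ``local'' quantities $\tau_1,\tau_2,\tau_3$ (built from $a$ and the individual $c_i$'s) with the ``global'' $\te_i$'s. Observe first that the two right-hand sides simplify: since $u=\ot_2\oa/(\te_1a)$ one has $\te_1a/(\ot_2\oa)=1/u$ and $\te_2a^2/(\ot_2\oa^2)=\overline{u}/u$, so the targets read $\tau_3/\tau_1=u^{1-2^k}$ and $\tau_2/\tau_1=(\overline{u}/u)^{2^k}$. The device is then to eliminate $\oa$ through $\oa=(\te_1u/\ot_2)a$. Substituting this into $\tau_1=(c_0\oa+c_1a)\oa^{2^k}$, $\tau_2=(c_2\oa+c_3a)a^{2^k}$ and $\tau_3=(c_0\oa^{2^k}+c_2a^{2^k})\oa$, and dividing by $\tau_1$ (legitimate since $\tau_1\ne 0$), each ratio becomes an explicit rational function of $u$. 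Clearing denominators and reducing powers of $u$ by $u^{2^k}+u=\te_4/\te_1$ and $u^2+u=\te_2\ot_2/\te_1^2$, the identity for $\tau_3/\tau_1$ collapses---after using $\te_3=\ot_2^{2^k}/\te_1^{2^k-1}$---exactly to identity~(2) of Lemma~\ref{2:ids}, $c_0\te_4+c_1\ot_2+c_2\te_3=0$, while the identity for $\tau_2/\tau_1$ reduces to identities~(2) and~(3), the extra relation $c_0\te_2+c_1(\te_1+\te_4)+c_3\te_3=0$ accounting for the term independent of $u$. The only delicate point I anticipate is the bookkeeping in~(4): tracking the two admissible values of $u$ and the several $2^k$-th powers while repeatedly invoking the already-established relations among the $\te_i$'s. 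Conceptually, however, it is just substitution followed by Lemma~\ref{2:ids}.
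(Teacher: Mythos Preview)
Your proposal is correct and follows essentially the same route as the paper. The only differences are cosmetic: you work throughout with the variable $u=\ot_2\oa/(\te_1 a)$, whereas the paper uses $\gamma=\oa/a$; and in part~(4) your reductions via $u^{2^k}+u=\te_4/\te_1$ and $u^2+u=\te_2\ot_2/\te_1^2$ treat both values $u\in\{\xi,\oxi\}$ at once, while the paper checks $u=\xi$ explicitly and asserts the other case is analogous. Both arguments land on identities~(2) and~(3) of Lemma~\ref{2:ids} to finish~(4).
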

\begin{proof}
(1). If $v_1=0$, then
\begin{eqnarray} \label{geq:ma} M(a)=\te_1 a \overline{a}+\ot_2 \overline{a}^2+\te_2 a^2=0.  \end{eqnarray}
By Lemma \ref{3:Mnot0}, it is necessary that $\underline{c} \in \Gamma_1$, that is, $\Tr_{1}^m(\te_2\overline{\te}_2/\te_1^2)=1$. It is obvious that $\te_1\te_2\te_3\te_4 \ne 0$. The desired expression on $\frac{\te_2 \ot_2}{\te_1^2}$ follows directly from (6) of Lemma \ref{2:ids}.

(2). Multiplying $\frac{\ot_2}{\te_1^2a^2}$ on both sides of \eqref{geq:ma}, we obtain
\begin{eqnarray*}
 \left(\frac{\overline{a}\overline{\te}_2}{a\te_1}\right)^2+\left(\frac{\overline{a}\overline{\te}_2}{a\te_1}\right)+\frac{\te_2\overline{\te}_2}{\te_1^2}=0.
\end{eqnarray*}
This proves (2) of Lemma \ref{geq:lem}.

(3). Suppose $\frac{\overline{a}\overline{\te}_2}{a\te_1}=\xi$, that is, $\frac{\overline{a}}{a}=\frac{\xi\te_1}{\overline{\te}_2}$. Using the relations $\xi^{2^k}+\xi=\te_4/\te_1$ and $\left(\te_2/\te_1\right)^{2^k}=\ot_3/\te_1$, it is easy to compute that 
\begin{eqnarray*}
  \te_4+\left(\frac{\overline{a}}{a}\right)^{2^k}\te_3+\left(\frac{\overline{a}}{a}\right)\overline{\te}_2 &=&  \te_4+\left(\frac{\xi\te_1}{\overline{\te}_2}\right)^{2^k}\te_3+\left(\frac{\xi\te_1}{\overline{\te}_2}\right)\overline{\te}_2\\
&=&
     \te_4+\left(\xi+\frac{\te_4}{\te_1}\right)\cdot\frac{\te_1}{\te_3}\cdot\te_3+\xi\te_1
     =0,
\end{eqnarray*}
that is, $v_2=0$. The other case that $\frac{\overline{a}\overline{\te}_2}{a\te_1}=\overline{\xi}$ can be handled in the same way. Now using the identity $v_1+v_2+v_3=0$, we conclude that $v_2=v_3=0$.

(4). We only consider the case $\frac{\overline{a}\overline{\te}_2}{a\te_1}=\xi$ since the other case can be handled in exactly the same manner. For simplicity, denoting $\gamma=\frac{\overline{a}}{a}$, we have $\gamma=\frac{\xi\te_1}{\overline{\te}_2}$ and
\begin{eqnarray*}
\frac{\tau_3}{\tau_1} &=& \frac{(c_0\overline{a}^{2^k}+c_2a^{2^k})\overline{a}}{(c_0\overline{a}+c_1a)\overline{a}^{2^k}}=\frac{1}{\gamma^{2^k-1}}\cdot \frac{c_0\gamma^{2^k}+c_2}{c_0\gamma+c_1}.
\end{eqnarray*}
Using $\xi^{2^k}+\xi=\te_4/\te_1$, $\left(\te_2/\te_1\right)^{2^k}=\ot_3/\te_1$ and (2) of Lemma \ref{2:ids}, we can obtain
\begin{eqnarray*}
\frac{c_0\gamma^{2^k}+c_2}{c_0\gamma+c_1}
&=& \frac{c_0\left(\frac{\xi\te_1}{\overline{\te}_2}\right)^{2^k}+c_2}{c_0\frac{\xi\te_1}{\overline{\te}_2}+c_1}
 =\frac{c_0\left(\xi+\frac{\te_4}{\te_1}\right)\cdot\frac{\te_1}{\te_3}+c_2}{c_0\frac{\xi\te_1}{\overline{\te}_2}+c_1}
 \\
&=& \frac{(c_0\te_4+c_2\te_3+c_0\te_1\xi)\overline{\te}_2}{(c_0\te_1\xi+c_1\overline{\te}_2)\te_3}
 =\frac{\overline{\te}_2}{\te_3}=
\left(\frac{\te_1}{\overline{\te}_2}\right)^{2^k-1}.
\end{eqnarray*}
This gives the desired expression of $\tau_3/\tau_1$.

As for $\tau_2/\tau_1$, using $\gamma=\frac{\overline{a}}{a}=\frac{\xi \te_1}{\ot_2}$, it is easy to see that  \begin{eqnarray*}
  \frac{\tau_2}{\tau_1}
  &=& \frac{(c_2\overline{a}+c_3a)a^{2^k}}{(c_0\overline{a}+c_1a)\overline{a}^{2^k}}
=\frac{1}{\gamma^{2^k}} \frac{(c_2\gamma+c_3)}{(c_0\gamma+c_1)}
=\frac{c_2\te_1\xi+c_3\overline{\te}_2}{\gamma^{2^k}\left(c_0\te_1\xi+c_1\overline{\te}_2\right)}. \end{eqnarray*}
A direct calculation gives
 \begin{eqnarray*}
   \left(c_0\te_1\xi+c_1\overline{\te}_2\right) \left(\overline{\xi}+\frac{\te_4}{\te_1}\right)\frac{\te_1}{\te_3}
   &=&\frac{\te_1}{\te_3}\left(c_0\te_1\xi\overline{\xi}+c_1\overline{\te}_2\overline{\xi}+c_0\te_4\xi+\frac{c_1\overline{\te}_2\te_4}{\te_1}\right) \\
     &=& \frac{\te_1}{\te_3}\left((c_0\te_4+c_1\overline{\te}_2)\xi+(c_0\te_2+c_1\te_1+c_1\te_4)\frac{\overline{\te}_2}{\te_1}\right) \\
      &=& c_2\te_1\xi+c_3\overline{\te}_2,
 \end{eqnarray*}
where we have used the relations $\overline{\xi}=\xi+1$, $\xi^2+\xi=\te_2\overline{\te}_2/\te_1^2$ and (2)-(3) of Lemma \ref{2:ids}.
Thus we have
\begin{eqnarray*}
  \frac{\tau_2}{\tau_1}
  &=& \frac{1}{\gamma^{2^k}}\left(\overline{\xi}+\frac{\te_4}{\te_1}\right)\frac{\te_1}{\te_3}. \end{eqnarray*}
Recalling again the relations $\gamma=\overline{a}/a=\frac{\xi \te_1}{\ot_2}$, $\overline{\xi}^{2^k}+\overline{\xi}=\te_4/\te_1$ and $\left(\te_2/\te_1\right)^{2^k}=\ot_3/\te_1$ gives the desired expression of $\tau_2/\tau_1$. Now the proof of Lemma \ref{geq:lem} is complete.
\end{proof}


\section{Proof of Theorem \ref{1:mainthm}} \label{sec5:mainthm}

Recall all the notations from the setting of Theorem \ref{1:mainthm}. In this section we prove Theorem \ref{1:mainthm}.

\subsection{$\underline{c} \in \Gamma \Longrightarrow \ffc$ is a permutation}\label{sec-pp}

Let $\underline{c} \in \Gamma$ be fixed. To prove that $\ffc$ is a permutation, we show that the equation
\begin{equation}\label{4:equation01}
\ffc(x+a)+\ffc(x)=0
\end{equation}
is not solvable in $\ftwon$ for any $a \in \ftwon^*$. Following arguments from the previous section, we consider two cases, that $v_1\ne 0$ and $v_1=0$.

Case 1: $v_1\ne 0$.

For this case \eqref{4:equation01} is equivalent to \eqref{differential-equation} with $b=0$, which can be written  as
\begin{eqnarray*} \label{4:eqper} x^{2^k}+\tau \overline{x}+(1+\tau)x+1=0,\end{eqnarray*}
where $\tau$ is defined in \eqref{3:tau}. According to Lemma \ref{3:gen-eq}, since $m$ is odd and $\overline{\mu}+\mu=\lambda$, we have
\[\Tr_1^n \left(\frac{\mu^{2^k}}{\lambda^{2^k}}\right)=\Tr_1^n \left(\frac{\mu}{\lambda}\right)=\Tr_1^m \left(\frac{\overline{\mu}+\mu}{\lambda}\right)=\Tr_1^m(1)=1,\]
hence by Lemma \ref{lemma-core},  \eqref{differential-equation} with $b=0$ and equivalently \eqref{4:equation01} is not solvable in $\ftwon$ for this $a \in \ftwon^*$.

Case 2: $v_1=0$.

In this case \eqref{4:equation01} is equivalent to  \eqref{equation1} with $b=0$. Here $\tau_5=\ffc(a)$. 

We first claim that $\ffc(a)\ne 0$. Otherwise, since $\tau_1+\tau_2=\ffc(a)=0$, we have $\tau_1=\tau_2$. Suppose $\tau_1 =0$, then $\tau_2=0$, letting $\gamma=\overline{a}/{a}$, we have
\[c_0 \gamma+c_1=0, \quad c_2 \gamma +c_3=0, \]
which implies that $c_0 \overline{c}_0=c_1 \overline{c}_1, c_2 \overline{c}_2=c_3 \overline{c}_3$, so $\te_1=0$, a contradiction. Now suppose $\tau_1 \ne 0$, then $\tau_2/\tau_1=1$, from (4) of Lemma \ref{geq:lem} we have
\[\frac{\te_2 a^2}{\overline{\te}_2 \overline{a}^2}=1 \Longrightarrow \left(\frac{\overline{a}}{a}\right)^2=\frac{\te_2}{\ot_2}. \]
This implies that
\[ \left(\frac{\overline{a}\ot_2}{a \te_1}\right)^2 =\frac{\te_2 \ot_2}{\te_1^2}=\xi+\xi^2. \]
However, this contradicts (2) of Lemma \ref{geq:lem}, which states that $\frac{\overline{a}\ot_2}{a \te_1}=\xi$ or $\overline{\xi}$.

Thus we have $\tau_5 =\ffc(a) \ne 0$. Since $v_1=v_2=v_3=0$, from the expressions of $v_i$'s in \eqref{3q:vis} and the relation $\tau_1+\tau_2=\tau_3+\tau_4=\ffc(a) \ne 0$, it is clear that $\tau_i\ne 0$ for any $i=1,2,3,4$.

Since $\tau_i\ne 0$ for any $i$ and $\tau_4/\tau_2=\overline{\tau}_3/\overline{\tau}_1$ due to $v_2=0$,  we can write 
\begin{eqnarray} \label{sol:deq}
 \overline{x}^{2^k}+\frac{\tau_2}{\tau_1}x^{2^k}+ \frac{\tau_3}{\tau_1}\overline{x}+\frac{\tau_4}{\tau_1}x+\frac{\tau_5}{\tau_1}
 =\left(\overline{x}^{2^k}+\frac{\tau_3}{\tau_1}\overline{x}\right)+\frac{\tau_2}{\tau_1}\left(x^{2^k}+\frac{\overline{\tau}_3}{\overline{\tau}_1}x\right)+\frac{\tau_5}{\tau_1}.
\end{eqnarray}
This implies that \eqref{equation1} is equivalent to the system of equations
 \begin{eqnarray}
 \label{5:eq1}
   x^{2^k}+\frac{\overline{\tau}_3}{\overline{\tau}_1}x+y &=&0, \\
 \label{5:eq2}
   \overline{y}+\frac{\tau_2}{\tau_1}y+\frac{\tau_5}{\tau_1}&=&0.
 \end{eqnarray}
We next show that for any $y \in \ftwon$ that satisfies \eqref{5:eq2},  \eqref{5:eq1} is not solvable for $x \in \ftwon$.

To this end, letting $\gamma=\frac{\overline{a}}{a}$, recalling (2) of Lemma \ref{geq:lem}, we may assume that
\[ \frac{\overline{a}\ot_2}{a \te_1}=\xi. \]
Then (4) of Lemma \ref{geq:lem} can be simplified as
\begin{eqnarray} \label{5-1:sim} \frac{\tau_3}{\tau_1}=\frac{1}{{\xi}^{2^k-1}}, \quad \frac{\tau_2}{\tau_1}=\left(\frac{\overline{\xi}}{\xi}\right)^{2^k}. \end{eqnarray}
Noting that $\tau_5=f_{\underline{c}}(a)=\tau_1+\tau_2$, by using new variables
\[ x_1:=x \overline{\xi}, \quad y_1:=\overline{\xi}^{2^k}y,\]
it is easy to see that the system of equations \eqref{5:eq1}-\eqref{5:eq2} is equivalent to
\begin{eqnarray}
 \label{5:eq3}
   x_1^{2^k}+x_1+y_1 &=&0, \\
 \label{5:eq4}
   \overline{y}_1+y_1+1&=&0.
 \end{eqnarray}
This system is clearly not solvable for $(x_1,y_1) \in \ftwon^2$ since for any $y_1 \in \ftwon$ that satisfies \eqref{5:eq4}, we have
\[\Tr^n_1(y_1)=\Tr_1^m(y_1+\overline{y}_1)=\Tr_1^m(1)=1, \]
so \eqref{5:eq3} is not solvable for $x_1 \in \ftwon$.

Combining Cases 1 and 2 we conclude that  for any $a\in\ftwon^*$,  \eqref{equation1} with $b=0$ has no solution for $x \in \ftwon$. Thus $\ffc$ is a permutation. This completes the proof of Part (1) of  Theorem \ref{1:mainthm}.

\subsection{$\underline{c} \in \Gamma_1 \Longrightarrow \delta(\ffc) =2^{m+1}$}

Let $\underline{c} \in \Gamma_1$ be fixed. Since $\ffc$ is a permutation, to find $\delta(\ffc)$, we fix an arbitrary  $a \in \ftwon^*$, and we study the largest possible number of solutions to the equation
\begin{equation}\label{5-2:equation01}
\ffc(x+a)+\ffc(x)=b,
\end{equation}
as $b$ runs through the set $\ftwon^*$. Similar to Subsection \ref{sec-pp}, we consider two cases, that $v_1\ne 0$ and $v_1=0$.

Case 1: $v_1\ne 0$.

This case is simple: since  \eqref{5-2:equation01} is equivalent to  \eqref{differential-equation}, and according to Lemma \ref{3:gen-eq}, it has either $0$ or $4$ solutions for any $b$.

Case 2: $v_1=0$.

This case is slightly more complicated. Equation \eqref{5-2:equation01} is equivalent to \eqref{equation1}, which is linear in the variable $x$. The largest number of solutions is achieved when $\tau_5=0$, so we choose $b=\ffc(a)$. Equation \eqref{equation1} becomes
\begin{eqnarray} \label{5-2:eq01}
\tau_1 \overline{x}^{2^k}+\tau_2 x^{2^k}+\tau_3 \overline{x}+\tau_4 x=0.
\end{eqnarray}
Since $v_1=v_2=v_3=0$, from the expressions of $v_i$'s in \eqref{3q:vis} and the relation $\tau_1+\tau_2=\tau_3+\tau_4=\ffc(a)=b \ne 0$, it is clear that $\tau_i\ne 0$ for any $i=1,2,3,4$. The expression \eqref{sol:deq} with $\tau_5=0$ implies that \eqref{5-2:eq01} is equivalent to the system of equations
 \begin{eqnarray}
 \label{5-2:eq1}
   x^{2^k}+\frac{\overline{\tau}_3}{\overline{\tau}_1}x+y &=&0, \\
 \label{5-2:eq2}
   \overline{y}+\frac{\tau_2}{\tau_1}y&=&0.
 \end{eqnarray}
Letting $\gamma=\frac{\overline{a}}{a}$ and recalling (2) of Lemma \ref{geq:lem}, we may assume that
\[ \frac{\overline{a}\ot_2}{a \te_1}=\xi. \]
Using \eqref{5-1:sim} and the new variables
\[ x_1:=x \overline{\xi}, \quad y_1:=\overline{\xi}^{2^k}y,\]
it is easy to see that the system of equations \eqref{5-2:eq1}-\eqref{5-2:eq2} is equivalent to
\begin{eqnarray}
 \label{5-2:eq3}
   x_1^{2^k}+x_1+y_1 &=&0, \\
 \label{5-2:eq4}
   \overline{y}_1+y_1&=&0.
 \end{eqnarray}
Clearly \eqref{5-2:eq4} is equivalent to $y \in \ftwom$, and for each such $y$, \eqref{5-2:eq3} has exactly two solutions for $x \in \ftwon$, that is, the system of equations \eqref{5-2:eq3}-\eqref{5-2:eq4} has $2^{m+1}$ solutions.

Combining Cases 1 and 2 we conclude that  for $\delta(\ffc)=2^{m+1}$. This completes the proof of Part (3) of  Theorem \ref{1:mainthm}.

\subsection{$\underline{c} \in \Gamma_0 \Longrightarrow \beta(\ffc)=4$}

Let $\underline{c} \in \Gamma_0$ be fixed. To compute the boomerang uniformity of $\ffc$, we need considerably more effort.

First, for any $a,b \in \ftwon^*$, we consider \eqref{5-2:equation01}. Lemma \ref{3:Mnot0} implies that $v_1 \ne 0$, then by Lemma \ref{3:gen-eq}, the equation has either $0$ or $4$ solutions, hence $\delta(\ffc) =4$.

Next, we recall a new formulation of the boomerang uniformity of $f(x)$ in \cite{LQSL}, which allows us to compute $\beta(f)$ conveniently without using the compositional inverse $f^{-1}$:
\begin{lem}(\cite{LQSL})\label{LQSL}
Let $f(x)$ be a permutation over $\ftwon$. Denote by $S_{f}(a,b)$ the number of solutions $(x,y)\in \ftwon^2$ of the equation system
\begin{numcases}{}
f(x+a)+f(y+a)=b, \nonumber \\
f(x)+f(y)=b. \nonumber
\end{numcases}
Then
\[\beta(f)=\max \left\{S_f(a,b): a,b\in \ftwon^*\right\}. \]
\end{lem}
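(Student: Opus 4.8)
The plan is to prove the stronger pointwise identity $S_f(a,b) = {\rm BCT}_f(a,b)$ for every pair $a,b \in \ftwon^*$; the claim $\beta(f) = \max\{S_f(a,b) : a,b \in \ftwon^*\}$ then follows at once by taking the maximum over such pairs in the defining equation $\beta(f) = \max_{a,b \in \ftwon \setminus \{0\}} {\rm BCT}_f(a,b)$.

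First I would use the fact that $f$ is a permutation to linearize the count defining the BCT. Starting from
\[
{\rm BCT}_f(a,b) = \#\left\{x \in \ftwon : f^{-1}(f(x)+b) + f^{-1}(f(x+a)+b) = a\right\},
\]
for each $x$ I introduce the auxiliary element $y := f^{-1}(f(x)+b)$, which is well defined precisely because $f^{-1}$ exists. By construction $f(x) + f(y) = b$, which is exactly the second equation of the system $S_f$.

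Next I would rewrite the remaining term. The BCT condition on $x$ now reads $f^{-1}(f(x+a)+b) = y+a$; applying $f$ to both sides turns this into $f(y+a) = f(x+a)+b$, that is, $f(x+a)+f(y+a) = b$, which is the first equation of $S_f$. Thus the single BCT condition on $x$ is equivalent to the assertion that the pair $(x,y)$, with $y = f^{-1}(f(x)+b)$, solves the system. I would then package this as a bijection: the map $x \mapsto \left(x,\, f^{-1}(f(x)+b)\right)$ sends each $x$ counted by ${\rm BCT}_f(a,b)$ to a solution of $S_f$, and it is injective (being the identity on the first coordinate) and surjective (any solution $(x,y)$ satisfies $f(y) = f(x)+b$, forcing $y = f^{-1}(f(x)+b)$). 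Hence $S_f(a,b) = {\rm BCT}_f(a,b)$.

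The step I expect to require the most care is the translation of the remaining term: one must apply $f$ to the identity $f^{-1}(f(x+a)+b) = y+a$ and confirm that this introduces no spurious solutions and loses none, so that the correspondence is a genuine bijection rather than a mere injection. This is guaranteed here solely by the bijectivity of $f$; without the permutation hypothesis the whole reformulation collapses, so I would be careful to invoke it explicitly at each inversion.
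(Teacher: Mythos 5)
Your proof is correct: the pointwise identity ${\rm BCT}_f(a,b)=S_f(a,b)$ via the substitution $y=f^{-1}(f(x)+b)$, together with the two-way translation between the BCT condition and the pair of equations, is exactly the standard argument, and the bijection is handled properly (injectivity on the first coordinate, surjectivity forced by $f(y)=f(x)+b$ and the bijectivity of $f$). Note that the paper itself offers no proof of this lemma---it is quoted from \cite{LQSL}---and your argument coincides with the proof given in that reference, so there is nothing to reconcile.
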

Since $\beta(\ffc) \ge \delta(\ffc)=4$, to complete the proof of Part (2) of Theorem \ref{1:mainthm}, it suffices to show that $S_{\ffc}(a,b)\leq 4$ for any $a,b \in \ftwon^*$. Now for any fixed $a,b \in \ftwon^*$, the value $S_{\ffc}(a,b)$ is equal to the number of solutions $(x,y) \in \ftwon^2$ of the following equation system
\begin{numcases}{}
\ffc(x+a)+\ffc(x)+\ffc(y+a)+\ffc(y)=0, \label{boomerang-diff-1} \\
\ffc(x)+\ffc(y)=b. \label{boomerang-diff-2}
\end{numcases}
Since $b \ne 0$, obviously $x+y \ne 0$.

We first consider \eqref{boomerang-diff-1}. Using the function $H_a(x)$ defined in \eqref{3:ha} which is linear in both $a$ and $x$,  \eqref{boomerang-diff-1} can be rewritten as
\[H_a(x)+H_a(y)=H_a(x+y)=0. \]
Letting $z=x+y \in \ftwon^*$, tracing back to \eqref{3:hab} with the right hand being 0, the above equation has roots of the form $ax$ where $x$ satisfies the equation
\[x^{2^k}+ \tau \overline{x}+(1+\tau)x=0,\]
and $\tau$ is given by \eqref{3:tau} in Lemma \ref{3:gen-eq}. Using Lemma \ref{3:gen-eq} and Lemma \ref{lemma-core}, we conclude that $z=ax \in Z_a$ where the set $Z_a=\left\{a, \eta(a),\eta^{(2)}(a)\right\}$ which has been defined in \eqref{ap:za}, and we have
\begin{eqnarray} \label{4:sumfz} \sum_{z \in Z_a}f(z)=\ffc(a)+\ffc\left(\eta(a)\right)+\ffc\left(\eta^{(2)}(a)\right)=H_a\left(\eta(a)\right)=0.\end{eqnarray}

Next, we consider \eqref{boomerang-diff-2}. Using $y=x+z$, \eqref{boomerang-diff-2} becomes
\begin{eqnarray} \label{4:hz} H_z(x)=\ffc(z)+b. \end{eqnarray}
It is known from Lemma \ref{3:gen-eq} that the above equation has at most four solutions in $\ftwon$ for each $z \in Z_a$, so immediately we obtain $\beta(\ffc) \le 12$. To find the exact value of $\beta(\ffc)$, we need to consider more carefully the solvability of \eqref{4:hz} for $z \in Z_a$.

Using the equivalence between \eqref{3:hab} and \eqref{differential-equation} and applying Lemma \ref{3:gen-eq}, we conclude that for any $z\in Z_a$, \eqref{4:hz} is equivalent to
\begin{align}  \label{boomerang-2}
x^{2^k}+\tau_z\overline{x}+(1+\tau_z)x+\nu_z= 0,
\end{align}
where $\tau_z$ and $\nu_z$ are given by
\begin{eqnarray*}
  \tau_z &=& \frac{\te_1 z \oz}{M(z)}\left(\frac{\te_4}{\te_1}+\left(\frac{\ot_2 \oz}{\te_1 z}\right)+\left(\frac{\ot_2 \oz}{\te_1 z}\right)^{2^k}\right), \nonumber \\
   \nu_z &=& 1+\frac{\oz^{2^k+1} \left(\oc_3 b+c_0\ob\right)+\oz^{2^k}z \left(\oc_2 b+c_1\ob\right)}{\left(z \oz\right)^{2^k}M(z)}. \label{4:vz}
\end{eqnarray*}
Here $M(z)$ is given in \eqref{3:M}. Let us define
\begin{eqnarray*}
A(z)&:=& c_0 \oz^{2^k+1}+c_1 \oz^{2^k}z, \\
B(z)&:=& c_2 \oz z^{2^k}+c_3 z^{2^k}z.
\end{eqnarray*}
It is easy to see that $A(z)+B(z)=\ffc(z)$, and $\nu_z$ can be written as
\begin{eqnarray} \label{4:nuz}
   \nu_z &=& 1+\frac{b \, \overline{B(z)}+\ob \, A(z)}{\left(z \oz\right)^{2^k}M(z)}.
\end{eqnarray}
Further, for $\lambda_z$ satisfying $\lambda^{2^k-1}_z=1+\tau_z+\overline{\tau}_z$ and $\mu_z$ satisfying $\mu_z^{2^k}+\mu_z=\tau_z \lambda_z$, we have
\begin{eqnarray} \label{4:lamz} \lambda_z &=&\frac{M(z)}{\te_1 z \oz},\\
\label{4:muzlam} \mu_z &=&\xi+\frac{\ot_2 \oz}{\te_1 z},\end{eqnarray}
where $\xi$ is defined in \eqref{3:xixi}.

Thus, by Lemma \ref{3:gen-eq} and Lemma \ref{lemma-core}, \eqref{boomerang-2} (and equivalently \eqref{4:hz}) has either $0$ or $4$ solutions in $\ftwon$, and it has $4$ solutions in $\ftwon$ if and only if
\begin{eqnarray} \label{4:4roots} \Tr_{1}^{n}\left(\frac{\nu_z}{\lambda^{2^k}_z}\right)=0 \;\;{\rm and}\;\;\Tr_{1}^{n}\left(\frac{\mu^{2^k}_z\overline{\nu}_z}{\lambda^{2^k}_z}\right)=0.\end{eqnarray}
According to \eqref{4:nuz} and \eqref{4:lamz}, it can be readily verified that
\begin{eqnarray} 
\Tr_{1}^{n}\left(\frac{\nu_z}{\lambda^{2^k}_z}\right)
&=&\Tr_{1}^{n}\left(\frac{1}{\lambda_z^{2k}}\right)+\Tr_{1}^{n}\left(\frac{\te_1^{2^k}\left(b \overline{B(z)}+\ob A(z)\right)}{M(z)^{2^k+1}}\right). \nonumber
\end{eqnarray}
Noting that $\lambda_z,\te_1, M(z) \in \ftwom$ and $A(z)+B(z)=\ffc(z)$, by using Lemma \ref{3:Mnot0}, for $z \in Z_a$, we obtain
\begin{eqnarray*} 
\Tr_{1}^{n}\left(\frac{\nu_z}{\lambda^{2^k}_z}\right)
&=&\Tr_{1}^{n}\left(\frac{\te_1^{2^k}\ob \left(A(z)+B(z)\right)}{M(z)^{2^k+1}}\right) =\Tr_{1}^{n}\left(\frac{\te_1^{2^k}\ob \ffc(z)}{M(a)^{2^k+1}}\right). \nonumber\\
\end{eqnarray*}
Equation \eqref{4:sumfz} indicates that $\sum_{z\in Z_a}\Tr_{1}^{n}\left(\frac{\nu_z}{\lambda^{2^k}_z}\right)=0$, hence $\Tr_{1}^{n}\left(\frac{\nu_z}{\lambda^{2^k}_z}\right)=0$ for exactly one $z\in Z_a$ or for all $z\in Z_a$. If $\Tr_{1}^{n}\left(\frac{\nu_z}{\lambda^{2^k}_z}\right)=0$ for exactly one $z\in Z_a$, then $S_{\ffc}(a,b)\leq 4$ by Lemma \ref{lemma-core}.

Now let us assume that
\begin{eqnarray} \label{4:dz0} \Tr_{1}^{n}\left(\frac{\nu_z}{\lambda^{2^k}_z}\right)=\Tr_{1}^{n}\left(\frac{\te_1^{2^k}\ob \ffc(z)}{M(a)^{2^k+1}}\right)=0 \quad \forall\; z\in Z_a.\end{eqnarray}

Using values $\nu_z$, $\lambda_z$ and $\mu_z$ given in \eqref{4:nuz}, \eqref{4:lamz} and \eqref{4:muzlam} respectively and noting that $\mu_z+\overline{\mu}_z=\lambda_z$, we obtain
\begin{eqnarray} \label{4:V}
T_z:= \Tr_{1}^{n}\left(\frac{\mu^{2^k}_z\overline{\nu}_z}{\lambda^{2^k}_z}\right)
&=&
\Tr_{1}^{n}\left(\frac{\mu^{2^k}_z}{\lambda^{2^k}_z} \left(1+\frac{\ob \, B(z)+b \, \overline{A(z)}}{\left(z \oz\right)^{2^k}M(z)}\right)\right) \nonumber \\
&=&
\Tr_{1}^{n}\left(\frac{\mu^{2^k}_z}{\lambda^{2^k}_z}\right)+\Tr_{1}^{n} \left(\frac{\mu^{2^k}_z \left(\ob \, B(z)+b \, \overline{A(z)}\right)}{\lambda^{2^k}_z\left(z \oz\right)^{2^k}M(z)}\right) \nonumber \\
&=&
1+\Tr_{1}^{n} \left(\frac{\te_1^{2^k} \, \ob}{M(z)^{2^k+1}} \left(\mu^{2^k}_z B(z)+\overline{\mu}_z^{2^k}A(z)\right)\right). \nonumber
\end{eqnarray}
Using $\eta$ given in \eqref{ap:eta} and noting that $\eta(z)=z \mu_z$
we have
\begin{eqnarray*}
T_z&=&1+\Tr_{1}^{n} \left(\frac{\te_1^{2^k} \, \ob}{M(a)^{2^k+1}} \left(\eta(z)^{2^k} \left(c_2 \oz+c_3 z\right)+\overline{\eta(z)}^{2^k} \left(c_0 \oz+c_1 z\right)\right) \right).
\end{eqnarray*}
Since it is known that $Z_a=\left\{a, \eta(a), \eta^{(2)}(a)\right\}$, $\eta^{(2)}(a)=\eta \circ \eta(a)=a+\eta(a)$ and $\eta \circ \eta^{(2)}(a)=a$, from Lemma \ref{ap:idf}, we can obtain the identity
\[\eta(z)^{2^k} \left(c_2 \oz+c_3 z\right)+\overline{\eta(z)}^{2^k} \left(c_0 \oz+c_1 z\right)=\ffc \left(\eta^{(2)}(z)\right), \]
that is,
\begin{eqnarray*}
T_z&=&1+\Tr_{1}^{n} \left(\frac{\te_1^{2^k} \, \ob \ffc\left(\eta^{(2)}(z)\right)}{M(a)^{2^k+1}} \right), \quad \forall \;z \in Z_a.
\end{eqnarray*}
Since $\eta^{(2)}(z) \in Z_a$, from \eqref{4:dz0} we derive that $T_z=1$ for any $z \in Z_a$. This means that \eqref{4:4roots} never holds for $z \in Z_a$, that is, $S_{\ffc}(a,b)=0$. Combining these two cases we conclude that $S_{\ffc}(a,b)\leq 4$ for any $a,b \in \ftwon^*$. Hence $\beta(\ffc) \le 4$. This completes the proof of Part (2) of Theorem \ref{1:mainthm}.

\section*{Acknowledgements}

 This work was supported by the National Natural Science Foundation of China (Nos. 61702166, 61761166010) and by the Research Grants Council (RGC) of Hong Kong (No. N\_HKUST169/17).



\begin{thebibliography}{99}


\bibitem{BDK01} E. Biham, O. Dunkelman, N. Keller, The rectangle attack-rectangling the Serpent, In Birgit Pfitzmann, editor, EUROCRYPT 2001,
LNCS, vol. 2045, pp. 340-357. Springer, Heidelberg, May 2001.

\bibitem{BDK02} E. Biham, O. Dunkelman, N. Keller, New results on boomerang and
rectangle attacks, In Joan Daemen and Vincent Rijmen, editors, FSE 2002,
LNCS, vol. 2365, pp. 1-16. Springer, Heidelberg, February 2002.

\bibitem{BSD} E. Biham, A. Shamir, Differential cryptanalysis of DES-like cryptosystems, J. Cryptology, 4(1) (1991), pp. 3-72.

\bibitem{BDD03} A. Biryukov, C. De Canni\`ere, G. Dellkrantz, Cryptanalysis
of SAFER++, In Dan Boneh, editor, CRYPTO 2003, LNCS, vol. 2729,
pp. 195-211. Springer, Heidelberg, August 2003.

\bibitem{BK09} A. Biryukov, D. Khovratovich, Related-key cryptanalysis of the
full AES-192 and AES-256, In Mitsuru Matsui, editor, ASIACRYPT 2009,
LNCS, vol. 5912, pp. 1-18. Springer, Heidelberg, December 2009.

\bibitem{BCO} C. Boura, A. Canteaut, On the boomerang uniformity of cryptographic sboxes, IACR Trans. Symmetric
Cryptol. 3 (2018), pp. 290-310.

\bibitem{Bracken-L} C. Bracken, G. Leander, A highly nonlinear differentially 4 uniform power mapping that permutes fields of even degree, Finite Fields Appl. 16 (2010), pp. 231-242.

\bibitem{BTT} C. Bracken, C. Tan, Y. Tan, Binomial differentially $4$-uniform permutations with high nonlinearity, Finite Fields Appl. 18(3) (2012), pp. 537-546.

\bibitem{BDMW} K.A. Browning, J.F. Dillon, M.T. McQuistan, A.J. Wolfe, An APN permutation in dimension six, Finite Fields Appl. 518 (2010), pp. 33-42.




\bibitem{CHP} C. Cid, T. Huang, T. Peyrin, Y. Sasaki, L. Song, Boomerang Connectivity Table: A new cryptanalysis tool,
 In Jesper Buus Nielsen and Vincent Rijmen, editors, Advances in Cryptology-EUROCRYPT 2018, pp. 683-714, Cham,
2018. Springer International Publishing.

\bibitem{DKS10} O. Dunkelman, N. Keller, A. Shamir, A practical-time related-key
attack on the KASUMI cryptosystem used in GSM and 3G telephony, In Tal
Rabin, editor, CRYPTO 2010, LNCS, vol. 6223, pp. 393-410. Springer,
--Heidelberg, August 2010.

\bibitem{FFW} S. Fu, X. Feng, B. Wu, Differentially $4$-uniform permutations with the best known nonlinearity from butterflies, IACR Trans. Symmetric Cryptol. (2) (2017), pp. 228-249.

\bibitem{Gold} R. Gold, Maximal recursive sequences with 3-valued recursive cross-correlation functions (corresp.), IEEE Trans. Inf. Theory 14(1)(1968), pp. 154-156.
    

\bibitem{Kasami} T. Kasami, The weight enumerators for several classes of subcodes of the 2nd order binary
reed-muller codes, Inf. Control. 18(4)(1971), pp. 369-394.

\bibitem{KKS01} J. Kelsey, T. Kohno, B. Schneier, Amplified boomerang
attacks against reduced-round MARS and Serpent, In Gerhard GoosJuris HartmanisJan van LeeuwenBruce Schneier, editors, FSE 2000. LNCS, vol. 1978, pp. 75-93. Springer, Berlin, Heidelberg.

\bibitem{KHP+12} J. Kim, S. Hong, B. Preneel, E. Biham, O. Dunkelman,
N. Keller, Related-key boomerang and rectangle attacks: Theory and
experimental analysis, IEEE Trans. Inf. Theory 58(7) (2012), pp. 4948-4966.


\bibitem{Kim}  K. Kim, J. Choe, D. Lee, D. Go, S. Mesnager. Solutions of $x^{q^k}+\cdots+x^q+x=a$ in $\mathbb{F}_{2^n}$, https://arxiv.org/pdf/1905.10579v1.pdf

\bibitem{Knud} L. R. Knudsen, M. J. B. Robshaw, The Block Cipher Companion, Springer, Berlin, 2011.

\bibitem{Lai}X. Lai, Higher order derivatives and differential cryptanalysis, Communications and Cryptography 276 (1994), pp. 227-233.

\bibitem{LQSL} K. Li, L. Qu, B. Sun, C. Li, New results about the
boomerang uniformity of permutation polynomials, IEEE Trans. Inf. Theory 65(11) (2019),
pp. 7542-7553.

\bibitem{Li-Qu} K. Li, C. Li, T. Helleseth, L. Qu, Cryptographically strong permutations from the butterfly structure, https://arxiv.org/abs/1912.02640

\bibitem{Li-QLC} K. Li, L. Qu, C. Li, H. Chen, On a conjecture about a class of permutation quadrinomials, https://arxiv.org/abs/1909.08209

\bibitem{Li-Xiong} N. Li, Z. Hu, M. Xiong, X. Zeng, 4-uniform BCT permutations from generalized butterfly structure. Under review.

\bibitem{LTYW} Y. Li, S. Tian, Y. Yu, M. Wang, On the generalization of butterfly structure, IACR Trans. Symmetric Cryptol. 2018(2) (2018), pp. 160-179.

\bibitem{Lidl} R. Lidl, H. Niederreiter, Finite Fields, Encyclopedia of Mathematics, vol. 20, Cambridge University Press, Cambridge, 1997.

\bibitem{M} M. Matsui, Linear cryptanalysis method for DES cipher, In Tor Helleseth, editor, Advances in Cryptology-EUROCRYPT'93, pp. 55-64, Berlin, Heidelberg, 1994. 




\bibitem{MTX} S. Mesnager, C. Tang, M. Xiong, On the boomerang uniformity of quadratic permutations, https://eprint.iacr.org/2019/277.pdf



\bibitem{Nyberg} K. Nyberg, Differentially uniform mappings for cryptography, In Tor Helleseth, editor, Advances in Cryptology-EUROCRYPT'93, pp. 134-144, Berlin, Heidelberg, 1994. 

\bibitem{Peng-T} J. Peng, C. Tan, New differentially 4-uniform permutations by modifying the inverse function on subfields, Cryptogr. Commun. 9 (2017), pp. 363-378.

\bibitem{PUB} L. Perrin, A. Udovenko, A. Biryukov, Cryptanalysis of a Theorem: Decomposing the only known solution to the big APN problem, In Matthew Robshaw, Jonathan Katz, editors, LNCS, vol. 9816, pp. 93-122. Springer, 2016.

\bibitem{Qu-TLG} L. Qu, Y. Tan, C. Li, G. Gong, More constructions of differentially 4-uniform permutations on $\mathbb{F}_{2^{2k}}$, Des. Codes Cryptogr. 78 (2) (2016), pp. 391-408.

\bibitem{Shannon} C. E. Shannon, Communication theory of secrecy systems, Bell Labs Technical Journal, vol. 28, no. 4, pp. 656--715, 1949.

\bibitem{Song-QH} L. Song, X. Qi, L. Hu. Boomerang connectivity table revisited: Application
to SKINNY and AES, https://eprint.iacr.org/2019/146.pdf

\bibitem{Tan-QTL} Y. Tan, L. Qu, C.H. Tan, C. Li, New families of differentially 4-uniform permutations over $\mathbb{F}_{2^{2k}}$, In Tor Helleseth and Jonathan Jedwab, editors, SETA 2012, LNCS, vol. 7280, pp. 25-39, Springer, 2012.

\bibitem{Tang-CT} D. Tang, C. Carlet, X. Tang, Differentially 4-uniform bijections by permuting the inverse function, Des. Codes. Cryptogr. 77(1)(2015), pp. 117-141.

\bibitem{TLZ} Z. Tu, N. Li, X. Zeng, J. Zhou, A class of quadrinomial permutation with boomerang uniformity four, to appear in IEEE Trans. Inf. Theory.

\bibitem{TLZA} Z. Tu, X. Liu, X. Zeng, A revisit of a class of permutation quadrinomial, Finite Fields Appl. 59 (2019), pp. 57-85.

\bibitem{TZH} Z. Tu, X. Zeng, T. Helleseth, New permutation quadrinomials over $\mathbb{F}_{2^{2m}}$, Finite Fields Appl. 50 (2018), pp. 304-318.


\bibitem{WAG} D. Wagner, The boomerang attack, In Lars R. Knudsen, editor, FSE'1999, LNCS, vol. 1636, pp. 156-170. Springer, Heidelberg, 1999.

\end{thebibliography}
\end{document}